\documentclass[12pt]{article}
\usepackage{latexsym}
\usepackage[dvips]{color}
\usepackage{amssymb,dsfont,amsfonts}
\usepackage{graphicx}
\usepackage{epsfig}
\usepackage{amsmath}
\usepackage{mathrsfs}
\usepackage{natbib}
\usepackage{hyperref}

\renewcommand{\cite}[1]{\citet{#1}}

\newtheorem{theorem}{Theorem}[section]
\newtheorem{proposition}[theorem]{Proposition}
\newtheorem{corollary}[theorem]{Corollary}
\newtheorem{lemma}[theorem]{Lemma}
\newtheorem{definition}[theorem]{Definition}
\newtheorem{remark}[theorem]{Remark}
\newtheorem{example}[theorem]{Example}

\newcommand{\esssup}{\operatorname*{\mathrm{ess\,sup}}}

\newcommand{\EE}{\mathbb{E}}

\newcommand{\E}{\EE}

\newcommand{\qed}{\mbox{ }~\hfill~$\Box$ \vspace{1ex} }

\newenvironment{proof}{\noindent{\sc Proof: }}{ \qed }

\newcommand{\rmII}{\text{\it I\kern-.08em I\,}}
\newcommand{\rmIII}{\text{\it I\kern-.08em I\kern-.08em I\,}}
\newcommand{\rmIV}{\text{\it I\kern-.08em V\,}}

\renewcommand{\epsilon}{\varepsilon}

\sloppy


\begin{document}


\title{Financial Markets with Volatility Uncertainty}
\author{J\"org Vorbrink\thanks{I thank Frank Riedel for
valuable advice and Yongsheng Song for fruitful discussions.
Financial support through the German Research Foundation (DFG) and
the International Graduate College ``Stochastics and Real World
Models" Beijing--Bielefeld is gratefully acknowledged.} \\
Institute of Mathematical Economics -- Bielefeld University\\
jvorbrin@math.uni-bielefeld.de }
\date { 
December 15, 2010} \maketitle

\begin{abstract}\vspace{-0.2cm}
We investigate financial markets under model risk
caused by uncertain volatilities. 
For this purpose we consider a financial market that features
volatility
uncertainty. 
To have a
mathematical consistent framework we use the notion of
\emph{G--expectation} and its corresponding \emph{G--Brownian
motion} recently introduced by \cite{peng0}.
Our financial market consists of a riskless asset and a risky
stock with price process modeled by a geometric G--Brownian
motion.
We adapt the notion of arbitrage to this more complex situation
and consider stock price dynamics which exclude arbitrage
opportunities. Due to volatility uncertainty the market is not
complete any more. We establish the interval of no--arbitrage
prices for general European contingent claims and deduce explicit
results in a Markovian setting. 
\noindent
\end{abstract}
\medskip
{\footnotesize{ \it Key words and phrases:} Pricing of contingent
claims, incomplete markets, volatility uncertainty, G--Brownian motion stochastic calculus \\
{\it \hspace*{0.6cm} MSC 2000 numbers: 91G20}\\
{\it \hspace*{0.6cm} JEL classification codes: G13, D81, C61}}
\newpage


\section{Introduction}
Many choice situations exhibit ambiguity. At least since the
Ellsberg Paradox the occurrence of ambiguity aversion and its
effect on making economic decisions are well established. One
possible way to model decisions under ambiguity is to use multiple
priors. Instead of analyzing a problem in a single prior model as
in the classical subjective expected utility approach one focuses
on a multiple prior model to describe the agent's uncertainty
about the right probability distribution.\par These models have
gained much attention in recent studies. The decision theoretical
setting of multiple priors was introduced by \cite{gs} and
extended to a dynamic model by \cite{es2}.
\cite{maccheroni} generalize this model to so--called variational
preferences. Multiple priors appear naturally in monetary risk
measures as introduced by \cite{artzner} and its dynamic
extensions, see \cite{fs} for an overview.
\par
Most literature essentially concentrates on the modeling of
multiple priors with respect to some reference
measure.\footnote{See \cite{ozaki} or \cite{riedel}, for example.}
The standing assumption is that all priors are at least (locally)
absolutely continuous with regard to a given reference measure.
This is often a technical assumption in order to simplify
mathematics. However, it significantly affects the informative
value of the multiple prior model.\footnote{The reference measure
plays the role of fixing the sets of measure zero. This means that
the decision maker has perfect knowledge about sure events which
is obviously not always a reasonable property from an economic
point of view. In particular, with a filtration satisfying the
``usual conditions" which is mostly arranged one excludes economic
interesting models since the decision maker consequently knows
already at time zero what can happen and what not. This of course
does not reflect reality well as for instance the recent incidents
about the Greek government bonds illustrate.} In diffusion models,
by Girsanov's theorem these multiple prior models only lead to
uncertainty in the mean of the considered stochastic process, see
\cite{ec} or \cite{riedel2} for instance. Thus in Finance, these
multiple prior models just lead to drift uncertainty for the stock
price.
\par Obviously, one may imagine another source of uncertainty
that involves the risk described by the standard deviation of a
random variable. Especially in Finance this is of great relevance.
For instance, the price of an option written on a risky stock
heavily depends on the underlying volatility. Also the value of a
portfolio consisting of risky positions is strongly connected with
the volatility levels of the corresponding assets. One major
problem in practice is to forecast the prospective volatility
process in the market. In this sense it appears quite natural to
permit volatility uncertainty.\footnote{Volatility is very
sensitive with respect to changing market data which makes its
predictability difficult. It also reflects the market's sentiment.
Currently high implicit volatility levels suggest nervous markets
whereas low levels rather feature bullish mood. By taking many
models into account one may protect oneself against surprising
events due to misspecification.}
 In the sense of
risk measuring it is desirable to seize this risk.
\par Our paper may form the basis for working with volatility
uncertainty in Finance. There are many problems like hedging of
contingent claims under constraints, cf. \cite{kk1}, or portfolio
optimization, cf. \cite{merton}, that should also be formulated
and treated in the presence of model uncertainty. For this one
will also need an economic reasonable notion of arbitrage as
developed in this paper.\par \cite{fernholz} consider the question
of outperforming the market when the assumption of no--arbitrage
is not imposed and incorporate volatility uncertainty. Our purpose
is to model volatility uncertainty on financial markets under the
assumption of no--arbitrage. We set a framework for modeling this
particular uncertainty and treat the pricing and hedging of
European contingent claims. The setting is closely related to
model risk which clearly matters in the sense of risk management.
As we shall see allowing for uncertain volatilities will lead to
incomplete markets, therefore effects the pricing and hedging of
claims, and involves model risk. Our solution approach also
provides a method to measure this risk. We consider European
claims written on a risky stock $S$ which additionally features
volatility uncertainty. Roughly, $S$ is modeled by the family of
processes
\begin{align*}
dS^\sigma_t=r S^\sigma_t dt+\sigma_t S^\sigma_t dB_t
\end{align*}
where $B=(B_t)$ is a classical Brownian motion and $\sigma_t$
attains various values in $[\underline{\sigma},\overline{\sigma}]$
for all $t$. In this setting we aim to solve
\begin{align}\label{H}
\sup_{P\in\mathcal{P}} E^P(H_T\gamma_T^{-1})\quad\text{ and }
\quad\sup_{P\in\mathcal{P}} E^P(-H_T\gamma_T^{-1})
\end{align}
where $H_T$ denotes the payoff of a contingent claim at maturity
$T,\,\gamma_T^{-1}$ a discounting, and $\mathcal{P}$ presents a
set of various probability measures describing the model
uncertainty.\par
It is by no means clear whether the expressions
above are well--posed and how to choose $\mathcal{P}$ in this
case. As seen in \cite{denis} modeling uncertain volatilities
leads to a set of priors $\mathcal{P}$ which consists of mutually
singular probability measures.\footnote{To understand how things
are involved see Example \ref{exneu} in Section \ref{section3}.}
So when dealing with model uncertainty we need a consistent
mathematical framework enabling us to work with processes under
various measures at the same time. We utilize the framework of
sublinear expectation and G--normal distribution introduced by
\cite{peng0} in order to model and control model risk.
\par We
consider a Black--Scholes like market with uncertain volatilities,
i.e., the stock price $S$ is modeled as a geometric G--Brownian
motion
\begin{align}\label{s}
dS_t=rS_tdt +S_tdB_t, \quad S_0=x_0,
\end{align}
where the canonical process $B=(B_t)$ is a G--Brownian motion with
respect to a sublinear expectation $E_G$. $E_G$ is called
G--expectation. It also represents a particular coherent risk
measure that enables to quantify the model risk induced by
volatility uncertainty. For the construction see \cite{peng0} or
\cite{peng1}.
\par G--Brownian motion forms a very rich and
interesting new structure which generalizes the classical
diffusion model. It replaces classical Brownian motion to account
for model risk in the volatility component. Each $B_t$ is
G--normal distributed which resembles the classical normal
distribution. The function G characterizes the degree of
uncertainty.\par Throughout the paper we consider the case where G
denotes the function $y\mapsto
G(y)=\frac{1}{2}\overline{\sigma}^2y^+
-\frac{1}{2}\underline{\sigma}^2y^-$ with volatility bounds
$0<\underline{\sigma}<\overline{\sigma}$. Each $B_t$ has mean zero
but an uncertain variance varying between the bounds
$\underline{\sigma}^2t$ and $\overline{\sigma}^2t$. So when $B_t$
is evaluated by $E_G$ we have $E_G(B_t)=E_G(-B_t)=0$ and
$E_G(B_t^2)=\overline{\sigma}^2t\neq
-E_G(-B_t^2)=\underline{\sigma}^2t$. Consequently, the stock's
volatility is uncertain and incorporated in the process $(B_t)$.
The quadratic variation process is no longer deterministic. All
uncertainty of $B$ is concentrated in $\langle B\rangle$. It is
absolutely continuous w.r.t. Lebesgue measure and its density
satisfies $\underline{\sigma}^2\leq \frac{d\langle
B\rangle_t}{dt}\leq \overline{\sigma}^2$. \par The related
stochastic calculus, especially It\^o's integral, can also be
established with respect to G--Brownian motion, cf. \cite{peng1}.
Notions like martingales are replaced by G--martingales with the
same meaning as one would expect from classical probability
theory, cf. Definition \ref{G--martingale}.\par Even though from
the first view of Equation \eqref{s} it is hidden we are also in a
multiple prior setting. \cite{pengdenis} showed that the
G--framework developed in \cite{peng0} corresponds to the
framework of quasi--sure analysis.\footnote{See \cite{denis}, for
example.} They established that the sublinear expectation $E_G$
can be represented as an upper expectation of classical
expectations, i.e., there exists a set of probability measures
$\mathcal{P}$ such that $E_G[X]=\sup_{P\in\mathcal{P}}E^P[X]$.
\par It should be mentioned that the stipulated dynamics for
the stock price in \eqref{s} imply that the discounted stock price
is a symmetric G--martingale. The word ``symmetric" implies that
the corresponding negative process is also a G--martingale which
is not necessarily the case in the G--framework.\footnote{This
characteristic forms a fundamental difference to classical
probability theory. Its effect is also reflected in the results of
Section \ref{section4}.} For this stock price model we prove that
the induced financial market does not admit any arbitrage
opportunity.\footnote{In an accompanying paper we will give
further economic verification for the dynamics in \eqref{s}.} In
addition, this accords with classical Finance in which problems
like pricing and hedging of claims are solved with respect to a
risk neutral martingale measure such that the discounted price
process becomes a (local) martingale.\par The notion of
G--martingale plays an important role in our analysis (see page
\pageref{G--martingale} 
in the appendix for a deeper understanding).
\par
By \cite{soner1} a G--martingale $(M_t)$ solves the following
dynamic programming principle, see also Appendix \ref{A3}:
\begin{align*}
M_t= \esssup_{Q'\in\mathcal{P}(t,Q)}E^{Q'}(M_s|\mathcal{F}_t)\quad
Q-a.s.,~t\leq s,
\end{align*}
where $\mathcal{P}(t,Q):=\{Q'\in\mathcal{P}|Q'=Q \text{ on
}\mathcal{F}_t\}$. Thus, a G--martingale is a multiple prior
martingale as considered in \cite{riedel}. The dynamic programming
principle states that a G--martingale is a supermartingale for all
single priors and a martingale for an optimal prior. Using the
G--martingale representation theorem of \cite{song1}, cf. Theorem
\ref{martingale}, and Remark \ref{K=0} we obtain that a symmetric
G--martingale will be a martingale with respect to all single
priors involved. So, also from this point of view the imposed
dynamics on $S$ in \eqref{s} are economic reasonable.
\par In such an ambiguous financial market we analyze European contingent
claims concerning pricing and hedging. We extend the asset pricing
to markets with volatility uncertainty. The concept of
no--arbitrage will play a major role in our analysis. Due to the
additional source of risk induced by volatility uncertainty the
classical definition of arbitrage is no longer adequate. We
introduce a new arbitrage definition that fits to our multiple
prior model with mutually singular priors. We verify that our
financial market does not admit any arbitrage opportunity in this
modified sense.
\par Using the concept of no--arbitrage we establish detailed results providing a better
economic understanding of financial markets under volatility
uncertainty. We determine an interval of no--arbitrage prices for
general contingent claims. The bounds of this interval -- the
upper and lower arbitrage prices $h_{up}$ and $h_{low}$ -- are
obtained as the expected value of the claim's discounted payoff
with respect to G--expectation, see \eqref{H}. They specify the
lowest initial capital required to hedge a short position in the
claim, or long position, respectively.\footnote{The expression on
the left hand side in \eqref{H} may be interpreted as the ask
price the seller is willing to accept for selling the claim,
whereas the other represents the bid price the buyer is willing to
pay.} \par Since $E_G$ is a sublinear expectation we know that
$h_{low}\neq h_{up}$ in general which verifies the market's
incompleteness. All in all, any price being within the interval
$(h_{low},h_{up})$ is a reasonable initial price for a European
contingent claim in the sense that it does not admit
arbitrage.\par In a Markovian setting when the claim's payoff only
depends on the current stock price of its underlying we deduce
more structure about the upper and lower arbitrage prices via the
so--called Black--Scholes--Barenblatt PDE. We derive an explicit
representation for the corresponding super--hedging strategies and
consumption plans.\footnote{It is also called side--payments, cf.
\cite{fs}.} In the special situation when the payoff function
exhibits convexity (concavity) the upper arbitrage price solves
the classical Black--Scholes PDE with volatility equal to
$\overline{\sigma}$ ($\underline{\sigma}$), and, vice versa
concerning the lower arbitrage price. This corresponds to a
worst--case volatility analysis as in \cite{elkaroui5} and
\cite{avellaneda}.\par In the Markovian setting the same results
were also established in \cite{avellaneda}. However, the
mathematical framework in \cite{avellaneda} is rather intuitive
and presumably not sufficient for a general study. Our analysis
thus provides a rigorous foundation for the results in
\cite{avellaneda}. A recent preprint of \cite{soner2} also
contains relevant results concerning uncertain
volatility.\footnote{The authors use an approach different from
ours.}
\cite{denis} also obtained the identity
$h_{up}=E_G(H_T\gamma_T^{-1})$ for the case when
$H_T\in\mathcal{C}_b(\Omega)$ by using quasi--sure
analysis.\footnote{$\mathcal{C}_b(\Omega)$ denotes the space of
bounded continuous functions on the path space.} We formulate the
financial market in the presence of volatility uncertainty using
the G--framework of \cite{peng0}. Utilizing the set of multiple
priors $\mathcal{P}$ induced by $E_G$ we are able to investigate
the market by a more powerful approach -- a no--arbitrage concept
which also accounts for volatility uncertainty. Only due to the
properties of G--Brownian motion we are also able to obtain
explicit results like the PDE derivation in the Markovian setting.
In particular we realize the PDE of \cite{black} as a special case
when the volatility uncertainty is set equal to zero. Furthermore,
we consider the lower arbitrage price $h_{low}$ and obtain the
identities for claims with $H_T\in L_G^p(\Omega_T), p\geq
2.$\footnote{$L^p_G(\Omega_T)$ represents a specific space of
random variables for which the G--expectation can be defined.
$\mathcal{C}_b(\Omega)$ is contained in $L^p_G(\Omega_T)$, see
\cite{peng1} or Equation \eqref{pathwise} in Appendix \ref{A1}.}
Both bounds together form the basis for an economic reasonable
price for the claim.
\par
The paper is organized as follows. Section \ref{section3} introduces the
financial market we focus on and extends terminologies from
mathematical finance. Section \ref{section4} stresses on the
concept of no--arbitrage and the pricing of contingent claims in
the general case.
In Section \ref{section5} we restrict ourselves to the Markovian
setting and obtain similar results as in \cite{avellaneda}.
Section \ref{section6} concludes. Appendix \ref{section2}
introduces to the notions of sublinear expectation. It summarizes
the necessary definitions, constructions and associated results
from the original sources as far as we used it in the introduction
and preceding sections. So the appendix serves as a reference work
for the reader to obtain a deeper understanding of the
mathematics. Whenever we use a new item from the G--framework in
the previous sections we will give a brief and rough explanation
in order to facilitate a first reading without knowing the
framework being presented in the Appendix \ref{section2}.



\section{The market model}\label{section3}

We aim to analyze financial markets that feature volatility
uncertainty.
The following example (see also \cite{soner2}) illustrates some
issues that arise when we deal with uncertain volatilities.
\begin{example}\label{exneu}
Let $(B_t)$ be a Brownian motion with regard to some measure $P$
and consider the price process modeled as
$dS_t^\sigma=\sigma_tdB_t,~ S_0=x$, for various processes
$\sigma$. If $\hat{\sigma}\neq\tilde{\sigma}$ we have $\langle
S^{\hat{\sigma}}\rangle_\cdot\neq\langle
S^{\tilde{\sigma}}\rangle_\cdot$ $P$--almost surely which implies
that the distributions $P\circ(S^{\hat{\sigma}})^{-1}$ and
$P\circ(S^{\tilde{\sigma}})^{-1}$ are mutually singular.\par So,
given a family of stochastic processes $X^P, P\in\mathcal{P}$, we
need to construct a \emph{universal} process $X$ which is uniquely
defined with respect to all measures at the same time such that
$X=X^P ~P$--a.s. for all $P\in\mathcal{P}$. Also when defining a
stochastic integral $I_t^P:=\int_0^t\eta_sdB_s$ for all
$P\in\mathcal{P}$ simultaneously the same situation arises.
Clearly, we can define $I_t^P$ under each $P$ in the classical
sense. Since $I_t^P$ may depend on the respective underlying
measure $P$ we are free to redefine the integral outside the
support of $P$. Thus in order to make things work we need to find
a universal integral $I_t$ satisfying $I_t=I_t^P~P$--a.s. for all
measures $P\in\mathcal{P}$.
\end{example}

Let us now come to the introduction of the financial market. All
along the paper we consider a financial market $\mathcal{M}$
consisting of two assets evolving according to
\begin{align}\label{S1}
d\gamma_t=& r\gamma_tdt, ~\gamma_0=1, \nonumber\\
dS_t=& r S_tdt+S_tdB_t, ~S_0=x_0>0,
\end{align}
with a constant interest rate $r\geq 0$. $B=(B_t)$ denotes the
canonical process which is a G--Brownian motion under $E_G$ with
parameters $\overline{\sigma}>\underline{\sigma}>0$. For the exact
definition and construction of the pair $B$ and $E_G$ see Appendix
\ref{A1}. The assumption of strict positive volatility is well
accepted in Finance. It is of economic relevance, see also Remark
\ref{remark_sigma}. The asset $\gamma=(\gamma_t)$ represents a
riskless bond as usual. Since $B=(B_t)$ is a G--Brownian motion,
$S$ is modeled as a geometric G--Brownian motion similarly to the
original Back--Scholes model, cf. \cite{black}, where the stock
price is modeled by a classical geometric Brownian motion.\par As
a consequence, in this market the stock price evolution does not
only involve risk modeled by the noise part but also ambiguity
about the risk due to the unknown deviation of the process B from
its mean. In terms of Finance this ambiguity in the stock price is
called volatility uncertainty. If we choose
$\sigma=\underline{\sigma}=\overline{\sigma}$ we are in the
classical Black-Scholes model, see \cite{black} or any good
textbook in Finance.

\begin{remark}
Note that the discounted stock price process $(\gamma_t^{-1}S_t)$
is directly modeled as a symmetric G--martingale with regard to
the corresponding G--expectation $E_G$. It is a well known fact in
Finance that problems like pricing or hedging contingent claims
for instance are handled under a risk--neutral probability measure
which leads to the favored situation in which the discounted stock
price process is a (local) martingale, cf.
\cite{duffie}.\footnote{This should also be the case in our
ambiguous setting. By modeling the discounted stock price directly
as a symmetric G--martingale we do not have to change the
sublinear expectation from a subjective to the risk-neutral
sublinear expectation and avoid the technical difficulties
involved. In an accompanying paper we will give economic
verification for this dynamics.}
\end{remark}

The use of G--Brownian motion in order to model the financial
market initially leads to a formulation of $\mathcal{M}$ which is
not based on a classical probability space. The representation
theorem for G--expectation, see Theorem
\ref{theorem_representation}, establishes a link also to a
probabilistic framework. It provides us with a probability space
$(\Omega_T,\mathcal{F},P)$ and a set of multiple priors
$\mathcal{P}$ such that the following identity holds
\begin{align*}
E_G(X)=\sup_{P\in\mathcal{P}}E^P(X)
\end{align*}
where $X$ is any random variable for which the G--expectation can
be defined, for instance if $X:\Omega_T\rightarrow\mathbb{R}$ is
bounded and continuous. $\mathcal{F}=\mathcal{B}(\Omega_T)$
denotes the Borel $\sigma$--algebra on the path space
$\Omega_T=C_0([0,T],\mathbb{R})$. Besides, there exists a process
$W=(W_t)$ which is a classical Brownian motion w.r.t. $P$. We can
consider the filtration $(\mathcal{F}_t)$ generated by $W$, i.e.,
$\mathcal{F}_t:=\sigma\{W_s|0\leq s\leq t\}\vee \mathcal{N}$ where
$\mathcal{N}$ denotes the collection of $P$--null subsets. Then
the set of multiple priors $\mathcal{P}$ can be constructed as
follows.\par Let $\Theta:=[\underline{\sigma},\overline{\sigma}]$,
and $\mathcal{A}_{0,T}^\Theta$ be the collection of all
$\Theta$--valued $(\mathcal{F}_t)$--adapted processes on $[0,T]$.
For any $\theta\in\mathcal{A}_{0,T}^\Theta$ we define $
B_t^{0,\theta}:=\int_0^t \theta_s dW_s$ and $P^\theta$ as the law
of $B^{0,\theta}=\int_0^\cdot\theta_sdW_s$, i.e., $P^\theta=P\circ
(B^{0,\theta})^{-1}$. Then $\mathcal{P}$ is the closure of
$\{P^\theta|\theta\in\mathcal{A}_{0,T}^\Theta\}$ under
the topology of weak convergence.\\

\par
All along the paper we will consider the induced tuple
$(\Omega_T,\mathcal{F},(\mathcal{F}_t),W,P)$ together with the set
of priors $\mathcal{P}$ as given. Since $\mathcal{P}$ represents
$E_G$, it also represents the volatility uncertainty of the stock
price and therefore of $\mathcal{M}$. The G--framework utilized in
this paper enables the analysis of stochastic processes for all
priors of $\mathcal{P}$ simultaneously. The terminology of
``quasi--sure" turns out to be very useful:
\begin{center}
 A set $A\in\mathcal{F}$ is called \emph{polar} if $P(A)=0$ for all
$P\in\mathcal{P}$. We say a property holds ``\emph{quasi--surely}"
(q.s.) if it holds outside a polar set.
\end{center}
When not stated otherwise all equations are also to be understood
in the sense of ``quasi-sure". This means that a property holds
almost-surely for all conceivable scenarios.
\par
Next we repeat some useful definitions which are standard in
Finance but have to be adapted to this more complex situation. We
will need to use the following spaces $L^p_G(\Omega_T),
H^p_G(0,T),$ and also $M^p_G(0,T), p\geq 1$, which denote specific
spaces in the G--setting. The first one concerns random variables
for which the G--expectation is defined, see Equation
\eqref{pathwise} in Appendix \ref{A1}. The other two are
particular spaces of processes for which stochastic integrals with
respect to $B$ or $\langle B\rangle_\cdot$, respectively, can be
defined. They are the closure of collections of simple processes
similar to the case when the classical It\^o integral is
constructed, see also Appendix \ref{A2} at pages \pageref{form} --
\pageref{G--martingale}.
\par All along the paper we will presume a
finite time horizon denoted by $T>0$.
\begin{definition}\label{consumption}
A \emph{trading strategy} in the market $\mathcal{M}$ is an
$(\mathcal{F}_t)-$adapted vector process
$(\eta,\phi)=(\eta_t,\phi_t)$, $\phi$ a member of $H_G^1(0,T)$
such that $(\phi_t S_t)\in H_G^1(0,T)$.\par A \emph{cumulative
consumption process} $C=(C_t)$ is a nonnegative
$(\mathcal{F}_t)-$adapted process with values in
$L^1_G(\Omega_T)$, and with increasing, RCLL paths on $(0,T]$, and
$C_0=0,~C_T<\infty$ q.s.
\end{definition}

Note that the stock's price process $S$ defined by \eqref{S1} is
an element of $M^2_G(0,T)$ which coincides with $H^2_G(0,T)$, see
\cite{peng1}. We impose the so--called self--financing condition,
that is, consumption and trading in $\mathcal{M}$ satisfy
\begin{align}\label{s.f.}
V_t:=\eta_t\gamma_t +\phi_tS_t= \eta_0\gamma_0 +\phi_0S_0
+\int_0^t\eta_ud\gamma_u+\int_0^t\phi_udS_u-C_t~~ \forall t\leq
T~q.s.
\end{align}
where $V_t$ denotes the value of the trading strategy at time
$t$.\par Sometimes, it is more appropriate to consider instead of
a trading strategy a \emph{portfolio process} which presents the
proportions of wealth invested in the risky stock.

\begin{remark}\label{proportion}
A portfolio process $\pi$ represents proportions of a wealth $X$
which are invested in the stock within the considered time
interval whereas a trading strategy $(\eta,\phi)$ represents the
total numbers of the respective assets the agent holds. Clearly,
there is a one-to-one correspondence between a portfolio process
and a trading strategy as defined above. If we define
\begin{align*}
\phi_t:=&\frac{X_t\pi_t}{S_t},\quad\eta_t:=\frac{X_t(1-\pi_t)}{\gamma_t},\quad\forall
t\leq T,
\end{align*}
$(\eta,\phi)$ constitutes a trading strategy in the sense of
equation \eqref{s.f.} as long as $\pi$ constitutes a portfolio
process with corresponding wealth process $X$ as required in
Definition \ref{def5} below.
\end{remark}

\begin{definition}\label{pfoprocess}
A \emph{portfolio process} is an $(\mathcal{F}_t)-$adapted real
valued process $\pi=(\pi_t)$ with values in $L^1_G(\Omega_T)$.
\end{definition}

\begin{definition}\label{def5}
For a given initial capital $y$, a portfolio process $\pi$, and a
cumulative consumption process $C$, consider the wealth equation
\begin{align*}
dX_t=&~
X_t(1-\pi_t)\frac{d\gamma_t}{\gamma_t}+X_t\pi_t\frac{dS_t}{S_t}-dC_t\\
=&~ X_trdt+X_t \pi_tdB_t-dC_t
\end{align*} with initial wealth $X_0=y$.
Or equivalently,
\begin{align*}
\gamma_t^{-1}X_t=y-\int_0^t
\gamma_u^{-1}dC_u+\int_0^t\gamma_u^{-1}X_u\pi_udB_u,~~\forall
t\leq T.
\end{align*}
If this equation has a unique solution $X=(X_t):=X^{y,\pi,C}$ it
is called the \emph{wealth process} corresponding to the triple
$(y,\pi,C)$.
\end{definition}

In order to have the stochastic integral well defined, $\int_0^T
X_t^2\pi_t^2dt<\infty$ must hold quasi--surely and we need to
impose the requirement that $(\pi_t X_t)\in H^p_G(0,T),p\geq 1,$
or $\in M^p_G(0,T),p\geq 2$. We incorporate this into the next
definition which describes admissible portfolio processes.

\begin{definition}\label{admissible}
A portfolio/consumption process pair $(\pi,C)$ is called
admissible for an initial capital $y\in\mathbb{R}$ if
\begin{itemize}
\item [(i)] the pair obeys the conditions of Definitions
\ref{consumption} -- \ref{def5} \item [(ii)] $(\pi_t
X^{y,\pi,C}_t)\in H_G^1(0,T)$ \item[(iii)] the solution
$X^{y,\pi,C}_t$ satisfies
\begin{align*} X^{y,\pi,C}_t\geq -L, ~~\forall t\leq T,~ q.s.
\end{align*}
where $L$ is a nonnegative random variable in $L^2_G(\Omega_T)$.
\end{itemize}
We then write $(\pi,C)\in\mathcal{A}(y)$.
\end{definition}

In the above Definitions \ref{consumption} -- \ref{admissible} we
need to assure that the associated stochastic integrals are
well--defined. In particular condition (ii) of Definition
\ref{admissible} ensures that the mathematical framework does not
collapse by allowing for too many portfolio processes.
\par
The agent is uncertain about the true volatility, therefore, she
uses a portfolio strategy which can be performed independently of
the realized scenario at the market. Hence, she is able to analyze
the corresponding wealth processes with respect to all conceivable
market scenarios $P\in\mathcal{P}$ simultaneously.\par These
restrictions on the portfolio and consumption processes replace
the classical condition of predictable processes. Decisions at
some time $t$ must not utilize information which is revealed
subsequently. In our financial setting, the processes have to be
members of particular spaces within the G--framework. Based on the
construction of these spaces (by means of (viscosity) solutions of
PDEs, cf. Appendix \ref{section2}) the portfolio and consumption
processes require some kind of regularity, in particular see
identity \eqref{pathwise} in Appendix \ref{A1}. The economic
interpretation is that decisions should not react too abruptly and
sensitive to revealed information.


\section{Arbitrage and contingent claims}\label{section4}

As usual in financial markets we impose the concept of arbitrage.
Due to this more complex framework both of economic and
mathematical manner we need a slightly more involved definition of
arbitrage.

\begin{definition}[Arbitrage in $\mathcal{M}$]\label{arb0}
We say there is an arbitrage opportunity in $\mathcal{M}$ if there
exist an initial wealth $y\leq 0$, an admissible pair
$(\pi,C)\in\mathcal{A}(y)$ with $C\equiv 0$ such that at some time
$T>0$
\begin{align*}
X^{y,\pi,0}_T&\geq 0 ~~q.s.,\quad \text{and}\\
P\left(X^{y,\pi,0}_T> 0\right)&>0 ~~\text{for at least one
}P\in\mathcal{P}.
\end{align*}
\end{definition}

If such a strategy in the sense above existed one should pursue
this strategy since it would be riskless and in the lucky
situation that $P$ drove the market dynamics one would make a
profit with positive probability. It is important to note that in
the definition of arbitrage we have to require quasi--sure
dominance for the wealth at time $T$ in order to exclude the risk
in all possible scenarios that may occur. So there should not
exist a scenario under which there is positive probability that
the terminal wealth is less than zero.

\begin{remark}\label{discussion1}
The second condition in the definition of arbitrage is just the
negation of $X_T^{y,\pi,C}\leq 0$ q.s. Hence, combined with the
first condition it excludes that $X_T^{y,\pi,C}$ equals zero
quasi--surely.\par We identify $(y,\pi,C)$ as an arbitrage if
there exists profit with positive probability in at least one
scenario even though there does not exist profit with positive
probability in many others. Of course, one could also define
arbitrage by the requirement that the second condition has to hold
for all scenarios, i.e., there existed profit with positive
probability in all scenarios. We think that this kind of arbitrage
definition is not very reasonable from an economic point of view,
see Remark \ref{discussion2}.
\end{remark}

\begin{lemma}[No--arbitrage]\label{no-arb}
In the financial market $\mathcal{M}$ there does not exist any
arbitrage opportunity.
\end{lemma}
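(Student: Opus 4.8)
The plan is to show that the discounted wealth process is a G--supermartingale for every admissible strategy with no consumption, and then use the defining property of the G--expectation together with the quasi--sure dominance requirement to rule out arbitrage. First I would fix an admissible pair $(\pi,C)\in\mathcal{A}(y)$ with $C\equiv 0$ and consider the discounted wealth $\tilde X_t:=\gamma_t^{-1}X^{y,\pi,0}_t$. By Definition \ref{def5} it satisfies $\tilde X_t=y+\int_0^t\gamma_u^{-1}X_u\pi_u\,dB_u$, so $\tilde X$ is a stochastic integral against G--Brownian motion. The integrand $(\gamma_u^{-1}X_u\pi_u)$ lies in $H^1_G(0,T)$ by admissibility condition (ii) (together with $\gamma^{-1}$ being bounded), and by condition (iii) the process $\tilde X$ is bounded from below by $-L$ for some $L\in L^2_G(\Omega_T)$. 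Hence $\tilde X$ is a G--martingale, or at least a G--supermartingale once the lower bound is invoked; in particular $E_G(\tilde X_T)\le \tilde X_0 = y$.

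Next I would translate the arbitrage conditions. Suppose for contradiction that $(y,\pi,0)$ is an arbitrage with $y\le 0$. The quasi--sure dominance $X^{y,\pi,0}_T\ge 0$ q.s.\ gives $\tilde X_T\ge 0$ q.s., hence also $-\tilde X_T\le 0$ q.s. Applying the supermartingale inequality and monotonicity of $E_G$ yields $0\le E_G(\tilde X_T)\le y\le 0$, so $E_G(\tilde X_T)=0$. Since $E_G(X)=\sup_{P\in\mathcal{P}}E^P(X)$ by the representation theorem, this forces $E^P(\tilde X_T)=0$ for every $P\in\mathcal{P}$; combined with $\tilde X_T\ge 0$ $P$--a.s.\ (which holds because $P(\tilde X_T<0)=0$ for all $P$, $\tilde X_T\ge0$ being quasi--sure) we get $\tilde X_T=0$ $P$--a.s.\ for every $P\in\mathcal{P}$. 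But then $P(X^{y,\pi,0}_T>0)=0$ for all $P\in\mathcal{P}$, contradicting the second arbitrage condition. This is essentially the observation recorded in Remark \ref{discussion1}.

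The main obstacle I anticipate is the first step: justifying that $\tilde X$ is genuinely a G--supermartingale (so that $E_G(\tilde X_T)\le y$) under only the integrability hypotheses imposed in Definition \ref{admissible}. A symmetric G--martingale built as a G--It\^o integral against $B$ with an $H^1_G$ integrand is a G--martingale by construction, but here the integrand need not a priori satisfy the stronger integrability ($M^2_G$) under which the cleanest martingale statements are available, and the lower bound $-L$ with $L\in L^2_G$ is exactly what is needed to pass from a local statement to a true supermartingale property via a localization/Fatou argument. I would handle this by a stopping--time localization: stop at $\tau_n=\inf\{t:\int_0^t(\gamma_u^{-1}X_u\pi_u)^2\,d\langle B\rangle_u\ge n\}$, note each stopped process is a bona fide symmetric G--martingale, so $E_G(\tilde X_{T\wedge\tau_n})=y$, and then let $n\to\infty$ using the uniform lower bound $\tilde X\ge -L\in L^2_G$ and the downward Fatou property of $E_G$ (valid for the sublinear expectation as a supremum of linear expectations, each of which enjoys classical Fatou) to conclude $E_G(\tilde X_T)\le y$. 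Once that inequality is in hand, the rest is the short deduction above; the key structural input is that volatility is bounded away from zero ($\underline\sigma>0$), which guarantees the stock price model is non--degenerate and the discounted price is a symmetric, not merely one--sided, G--martingale.
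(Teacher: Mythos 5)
Your proposal is correct and follows essentially the same route as the paper: use that the discounted wealth equals $y$ plus a G--It\^o integral with integrand in $H^1_G(0,T)$, so $E_G(\gamma_T^{-1}X_T^{y,\pi,0})\leq y\leq 0$, and combine this with $X_T^{y,\pi,0}\geq 0$ q.s.\ to force $X_T^{y,\pi,0}=0$ q.s., contradicting the second arbitrage condition. The localization/Fatou machinery you anticipate needing is not required -- condition (ii) of Definition \ref{admissible} places the integrand in $H^1_G(0,T)$ by hypothesis, and the zero G--expectation of the integral then follows directly from the continuous extension of the integral map from simple processes, which is exactly the one-line justification the paper gives in a footnote.
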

\begin{proof}
Assume there exists an arbitrage opportunity, i.e., there exists
some $y\leq 0$ and a pair $(\pi,C)\in\mathcal{A}(y)$ with $C\equiv
0$ such that $X_T^{y,\pi,0}\geq 0$ quasi--surely for some $T>0$.
Then we have $E_G(X_T^{y,\pi,0})\geq 0$. By definition of the
wealth process
\begin{align*}
0 \leq E_G\left(X^{y,\pi,0}_T\gamma_T^{-1}\right)\leq
y+E_G\left(\int_0^T \gamma_t^{-1}X^{y,\pi,0}_t\pi_tdB_t\right)=y
\end{align*}
since the G--expectation of an integral with respect to
G--Brownian motion is zero.\footnote{Here we used that
$(X^{y,\pi,0}_t\pi_t)\in H^1_G(0,T)$, cf. condition (ii) of
Definition \ref{admissible}.} Hence,
$E_G\left(X^{y,\pi,0}_T\gamma_T^{-1}\right)=0$ which again implies
$X_T^{y,\pi,0}\gamma_T^{-1}=0$ q.s. Thus, $(y,\pi,0)$ is not an
arbitrage.
\end{proof}\\

In the financial market $\mathcal{M}$ we want to consider European
contingent claims $H$ with payoff $H_T$ at maturity time $T$.
Here, $H_T$ represents a nonnegative, $\mathcal{F}_T$--adapted
random variable. All the time we impose the assumption $H_T\in
L^2_G(\Omega_T)$. The price of the claim at time $0$ will be
denoted by $H_0$. In order to find reasonable prices for $H$ we
use the concept of arbitrage. Similar to above we define an
arbitrage opportunity in the financial market $(\mathcal{M},H)$
consisting of the original market $\mathcal{M}$ and the contingent
claim $H$.

\begin{definition}[Arbitrage in $(\mathcal{M},H)$]\label{arbitrage}
There is an arbitrage opportunity in $(\mathcal{M},H)$ if there
exist an initial wealth $y\geq 0$ (respectively, $y\leq 0$), an
admissible pair $(\pi,C)\in\mathcal{A}(y)$ and a constant $a=-1$
(respectively, a=1), such that
\begin{align*}
y+a\cdot H_0\leq 0
\end{align*}
at time $0$, and
\begin{align*}
X^{y,\pi,C}_T&+a\cdot H_T\geq 0 ~~q.s.,\quad \text{and}\\
P\left(X^{y,\pi,C}_T+a\cdot H_T> 0\right)&>0 ~~\text{for at least
one }P\in\mathcal{P}
\end{align*}
at time $T$.
\end{definition}

The values $a=\pm 1$ in Definition \ref{arbitrage} indicate long
or short positions in the claim $H$, respectively. This definition
of arbitrage is standard in the literature, see \cite{karatzas2}.
For the same reasons as before we again require quasi--sure
dominance for the wealth at time $T$ and gain with positive
probability for only one possible scenario.\\  \par In the
following we show the existence of no--arbitrage prices for a
claim $H$ which exclude arbitrage opportunities. Compared to the
classical Black--Scholes model there are many no--arbitrage prices
for $H$ in general. We shall see that mostly hedging, or
replicating arguments, respectively, fail due to the additional
source of uncertainty induced by the G--normal distribution
causing the incompleteness of the financial market, see Remark
\ref{remarkK}. Thus in our ambiguous market $\mathcal{M}$ there
generally is either a self-financing portfolio strategy which
replicates the European claim nor a risk--free hedge for the claim
since the uncertainty represented by the occurring quadratic
variation term cannot be eliminated. Only for special claims $H$
when $\left(E_G[H_T|\mathcal{F}_t]\right)$ is a symmetric
G--martingale, cf. Remark \ref{remarkK}, we have $h_{up}=h_{low}$.
\par Clearly, there is only one single
G--Brownian motion which occurs in the financial market model.
However, due to the representation theorem for G--expectation
there are many probability measures involved in $\mathcal{M}$, cf.
Theorem \ref{theorem_representation}. Each measure reflects a
specific volatility rate for the stock price. Roughly speaking,
these measures induce the incompleteness since only one scenario
is being realized and only in this scenario the stock is being
traded. \par The functional $E_G$ is just a useful method to
control the dynamics by giving upper and lower bounds for European
contingent claim prices written on the stock, see Theorem
\ref{theorem_main}.\par
The following classes will matter in our
subsequent analysis.
\begin{definition}
Given a European contingent claim $H$ we define the lower hedging
class
\begin{align*}
\mathcal{L}:=\{y\geq 0|\exists~(\pi,C)\in\mathcal{A}(-y)
:~X^{-y,\pi,C}_T\geq -H_T~~q.s.\}
\end{align*}
and the upper hedging class
\begin{align*}
\mathcal{U}:=\{y\geq
0|\exists~(\pi,C)\in\mathcal{A}(y):~X^{y,\pi,C}_T\geq H_T~~q.s.\}.
\end{align*}
In addition, the lower arbitrage price is defined as
\begin{align*}
h_{low}:=\sup\{y|y\in\mathcal{L}\}
\end{align*}
and the upper arbitrage price as
\begin{align*}
h_{up}:=\inf\{y|y\in\mathcal{U}\}.
\end{align*}
\end{definition}

The main result of this section concerns the lower and upper
arbitrage price. It is possible to determine the prices
explicitly. We have
\begin{theorem}\label{theorem_main} Given the financial market
$(\mathcal{M},H)$. The following identities hold:
\begin{align*}
h_{up}=&\,E_G(H_T\gamma_T^{-1})\\
h_{low}=&-E_G(-H_T\gamma_T^{-1}).
\end{align*}
\end{theorem}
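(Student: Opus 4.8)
The plan is to prove each of the two identities by a pair of opposite inequalities. One direction in each case is a \emph{duality inequality} that needs only monotonicity and translation invariance of $E_G$ together with the fact — already used in the proof of Lemma \ref{no-arb} — that $E_G\big(\int_0^T\eta_u\,dB_u\big)=0$ for $\eta\in H^1_G(0,T)$. The other direction is a \emph{sharpness} statement requiring the explicit construction of a (super-, resp.\ sub-) hedging strategy attaining the bound, for which I would invoke the G--martingale representation theorem of Song (Theorem \ref{martingale}) and Remark \ref{K=0}.

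For $h_{up}\le E_G(H_T\gamma_T^{-1})$: the process $M_t:=E_G[H_T\gamma_T^{-1}\,|\,\mathcal{F}_t]$ is a G--martingale, so by Theorem \ref{martingale} there are $Z\in H^1_G(0,T)$ and a non-decreasing process $K$ with $K_0=0$ and $(-K_t)$ a G--martingale such that $M_t=M_0+\int_0^t Z_s\,dB_s-K_t$. I would take $C\equiv 0$, define the discounted wealth by $\gamma_t^{-1}X_t:=M_0+\int_0^t Z_s\,dB_s=M_t+K_t$, and read off the portfolio from the wealth equation of Definition \ref{def5} as $\pi_t:=\gamma_tZ_t/X_t$, with $\pi_t:=0$ on $\{X_t=0\}$ (where one checks $Z_t=0$ q.s., the usual point). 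Then $\gamma_t^{-1}X_t=M_t+K_t\ge 0$, so condition (iii) of Definition \ref{admissible} holds with $L=0$; $\pi_tX_t=\gamma_tZ_t\in H^1_G(0,T)$ gives condition (ii); and $\gamma_T^{-1}X_T=M_T+K_T=H_T\gamma_T^{-1}+K_T\ge H_T\gamma_T^{-1}$ q.s. Hence $M_0=E_G(H_T\gamma_T^{-1})\in\mathcal{U}$. Conversely, for $y\in\mathcal{U}$ with witness $(\pi,C)$, the wealth equation and monotonicity of $C$ give $\gamma_T^{-1}X_T\le y+\int_0^T\gamma_u^{-1}X_u\pi_u\,dB_u$ q.s.; applying $E_G$, using $X_T\ge H_T$ q.s., and $(\gamma_u^{-1}X_u\pi_u)\in H^1_G(0,T)$ (condition (ii)), one gets $E_G(H_T\gamma_T^{-1})\le E_G(\gamma_T^{-1}X_T)\le y$, and the infimum over $\mathcal{U}$ yields $E_G(H_T\gamma_T^{-1})\le h_{up}$.

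The identity for $h_{low}$ is symmetric, working with the G--martingale $\tilde N_t:=E_G[-H_T\gamma_T^{-1}\,|\,\mathcal{F}_t]$, whose value at $0$ is $\tilde N_0=E_G(-H_T\gamma_T^{-1})=:-y^{*}$ with $y^{*}:=-E_G(-H_T\gamma_T^{-1})\ge 0$ (non-negativity since $-H_T\gamma_T^{-1}\le 0$). Its representation $\tilde N_t=-y^{*}+\int_0^t\tilde Z_s\,dB_s-\tilde K_t$ suggests the sub-hedge with initial capital $-y^{*}$, discounted wealth $\gamma_t^{-1}X_t:=-y^{*}+\int_0^t\tilde Z_s\,dB_s=\tilde N_t+\tilde K_t$ and $C\equiv 0$, for which $\gamma_T^{-1}X_T=-\gamma_T^{-1}H_T+\tilde K_T\ge-\gamma_T^{-1}H_T$ q.s., so $y^{*}\in\mathcal{L}$ and $h_{low}\ge-E_G(-H_T\gamma_T^{-1})$. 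For the reverse, if $y\in\mathcal{L}$ with witness $(\pi,C)\in\mathcal{A}(-y)$, the wealth equation, monotonicity of $C$ and $X_T\ge-H_T$ q.s.\ give $y-\gamma_T^{-1}H_T\le\int_0^T\gamma_u^{-1}X_u\pi_u\,dB_u$ q.s.; applying $E_G$ and translation invariance yields $y+E_G(-\gamma_T^{-1}H_T)\le 0$, i.e.\ $y\le-E_G(-H_T\gamma_T^{-1})$, and the supremum over $\mathcal{L}$ completes the proof.

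I expect the main obstacle to lie in the \emph{construction} steps, specifically in checking genuine admissibility in the sense of Definition \ref{admissible}. For $h_{up}$ this is painless because the discounted wealth is non-negative ($L=0$), but for the sub-hedge of $h_{low}$ the discounted wealth $\gamma_t^{-1}X_t=-y^{*}+\int_0^t\tilde Z_s\,dB_s$ is only a symmetric G--martingale and need not be bounded below by a constant, so one must exhibit $L\in L^2_G(\Omega_T)$ with $X_t\ge -L$ for all $t\le T$ q.s. This should follow from a Doob/Burkholder--Davis--Gundy-type estimate for the G--stochastic integral applied to $\tilde Z$; it is presumably the reason why the sharp $h_{low}$ identity is stated under the stronger integrability $H_T\in L^p_G(\Omega_T)$, $p\ge 2$, which also guarantees the required integrability of $\tilde Z$ and $\tilde K$ in Song's theorem. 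Secondary technical points are the well-posedness of $\pi_t=\gamma_tZ_t/X_t$ on $\{X_t=0\}$ and verifying that $H_T\gamma_T^{-1}\in L^2_G(\Omega_T)$ meets the hypotheses of Theorem \ref{martingale}.
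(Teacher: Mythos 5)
Your proposal is correct and follows essentially the same route as the paper: one inequality in each identity is the duality bound obtained by applying $E_G$ to the wealth equation and using that the $G$--stochastic integral has zero $G$--expectation, and the other is the sharpness step via Song's martingale representation theorem (Theorem \ref{martingale}) applied to $M_t=E_G(\pm H_T\gamma_T^{-1}\,|\,\mathcal{F}_t)$. The only divergence is in how the increasing part $K$ is absorbed: the paper consumes it, setting $C_t=\int_0^t\gamma_s\,dK_s$ so that the discounted wealth equals $M_t$ and replicates $H_T$ exactly (a choice it later exploits in the corollary on arbitrage at the boundary prices $h_{up}$ and $h_{low}$), whereas you take $C\equiv 0$ and super-replicate with discounted wealth $M_t+K_t$; both choices establish $E_G(H_T\gamma_T^{-1})\in\mathcal{U}$ and the analogous membership in $\mathcal{L}$. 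Your concern about verifying condition (iii) of Definition \ref{admissible} for the sub-hedge is legitimate but not specific to your variant: the paper's construction faces the same issue and disposes of it only tersely, by bounding $\gamma_t^{-1}X_t=E_G(-H_T\gamma_T^{-1}|\mathcal{F}_t)\ge E_G(-H_T|\mathcal{F}_t)$ and appealing to $-H_T\in L^2_G(\Omega_T)$, which implicitly requires exactly the kind of maximal estimate you point to.
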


Before proving the theorem we establish some results about the
hedging classes. As proved in \cite{kk1} one can easily show that
$\mathcal{L}$ and $\mathcal{U}$ are connected intervals. More
precisely we have
\begin{lemma}\label{easy}
$y\in\mathcal{L}$ and $0\leq z\leq y$ implies $z\in\mathcal{L}$.
Analogously, $y\in\mathcal{U}$ and $z\geq y$ implies
$z\in\mathcal{U}$.
\end{lemma}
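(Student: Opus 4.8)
The plan is to prove both assertions by the same elementary observation: if some amount of initial capital lets one (super\nobreakdash-)hedge, then any larger amount does as well, because the surplus can be parked in the riskless bond $\gamma$ without altering the risky position or the consumption stream, and this only \emph{increases} the terminal wealth.

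Consider first $\mathcal{U}$. Fix $y\in\mathcal{U}$ and $z\geq y$, and let $(\pi,C)\in\mathcal{A}(y)$ be a witness, so that $X^{y,\pi,C}_T\geq H_T$ q.s. It is cleanest to argue through the trading\nobreakdash-strategy description of Remark~\ref{proportion}: let $(\eta,\phi)$ be the trading strategy attached to $(y,\pi,C)$ and put $\phi':=\phi$, $\eta':=\eta+(z-y)$ --- since $\gamma_0=1$, this just means buying $z-y\geq0$ extra units of the bond at time $0$ and never retrading them. Substituting into the self\nobreakdash-financing identity~\eqref{s.f.} gives
\begin{align*}
V'_t=\eta'_t\gamma_t+\phi'_tS_t=V_t+(z-y)\gamma_t,\qquad V'_0=z,
\end{align*}
so the corresponding wealth process is $X'_t:=X^{y,\pi,C}_t+(z-y)\gamma_t$, which solves the wealth equation of Definition~\ref{def5} for initial capital $z$ with (derived) portfolio process $\pi'_t:=\pi_tX^{y,\pi,C}_t/X'_t$, because $dX'_t=X'_t\,r\,dt+\pi_tX^{y,\pi,C}_t\,dB_t-dC_t$ and $\pi'_tX'_t=\pi_tX^{y,\pi,C}_t$. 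I then check $(\pi',C)\in\mathcal{A}(z)$: the dollar position in the stock is unchanged, hence $(\pi'_tX'_t)=(\pi_tX^{y,\pi,C}_t)\in H^1_G(0,T)$ and condition (ii) of Definition~\ref{admissible} holds; since $\gamma_t>0$ and $z-y\geq0$ we have $X'_t\geq X^{y,\pi,C}_t\geq -L$ q.s. with the \emph{same} $L\in L^2_G(\Omega_T)$, so condition (iii) holds; and $X'_T=X^{y,\pi,C}_T+(z-y)\gamma_T\geq H_T$ q.s. Therefore $z\in\mathcal{U}$.

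The statement about $\mathcal{L}$ is the same computation with the signs adjusted. Fix $y\in\mathcal{L}$ and $0\leq z\leq y$, with witness $(\pi,C)\in\mathcal{A}(-y)$ satisfying $X^{-y,\pi,C}_T\geq -H_T$ q.s. The target initial capital is now $-z$, and $0\leq z\leq y$ gives $-z-(-y)=y-z\geq0$; so again I add $y-z$ bond units to the unchanged risky position, obtaining $X'_t:=X^{-y,\pi,C}_t+(y-z)\gamma_t$, which as above furnishes an admissible pair for initial capital $-z$ with $X'_T\geq X^{-y,\pi,C}_T\geq -H_T$ q.s. Since also $z\geq0$, this shows $z\in\mathcal{L}$.

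The argument is routine --- the paper is right to call it easy --- and the one place where something could in principle go wrong is that the modified strategy must still belong to the correct spaces of the G\nobreakdash-framework. But this is automatic: the only term of the wealth equation touched by the modification is the absolutely continuous adapted bond term $(z-y)\gamma_t$, whereas the integrand driving the stochastic integral with respect to $B$ --- the dollar amount held in the stock --- is literally unchanged, so its membership in $H^1_G(0,T)$ (and the integrability condition on $\int_0^T(X'_t\pi'_t)^2\,dt$) is inherited, and the lower bound $L$ and the consumption process $C$ are likewise left alone. Phrasing the construction through $(\eta,\phi)$ rather than through the proportion $\pi$ also sidesteps the harmless ambiguity in defining $\pi'$ on the event $\{X'_t=0\}$.
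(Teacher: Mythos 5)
Your proof is correct, but it uses a different mechanism from the paper's. The paper disposes of the surplus initial capital by \emph{consuming it immediately}: starting from the larger endowment, one adds an instantaneous consumption of the difference to $C$ and then runs the original strategy, so the wealth process on $(0,T]$ is literally the same as before and admissibility is inherited with nothing to check. You instead \emph{park the surplus in the bond}, replacing $X^{y,\pi,C}_t$ by $X^{y,\pi,C}_t+(z-y)\gamma_t$ while keeping the dollar position in the stock and the consumption stream untouched. Both arguments are one step long and both are valid; the trade-offs are minor but real. Your route has the small advantage of not needing a consumption jump at time $0^+$, which sits slightly awkwardly next to the requirement $C_0=0$ in Definition~\ref{consumption} (the paper's RCLL-on-$(0,T]$ phrasing is designed to permit exactly such a jump, but one has to notice that). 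The price you pay is that your wealth process changes, so you must re-verify admissibility and re-derive the proportion process $\pi'_t=\pi_tX^{y,\pi,C}_t/X'_t$, including the division-by-zero caveat on $\{X'_t=0\}$ that you correctly flag and sidestep via the trading-strategy formulation of Remark~\ref{proportion}; with the paper's consumption trick none of this arises because the wealth process, and hence the pair $(\pi,X)$ entering conditions (ii) and (iii) of Definition~\ref{admissible}, is unchanged. Your verification that $(\pi'_tX'_t)=(\pi_tX^{y,\pi,C}_t)\in H^1_G(0,T)$ and that the lower bound $-L$ still works is complete, so the argument stands as written.
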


The proof uses the idea that one ``just consumes immediately the
difference of the two initial wealth". To include the case
$\mathcal{U}=\emptyset$ we define $\inf \emptyset=\infty$.\\
\par
For $\sigma\in [\underline{\sigma},\overline{\sigma}]$ let us
define the Black--Scholes price of a European contingent claim $H$
\begin{align*}
u_0^\sigma:= E^{P^\sigma}(H_T\gamma_T^{-1})
\end{align*}
where $P^\sigma\in\mathcal{P}$ denotes the measure under which $S$
has constant volatility level $\sigma$. As mentioned in Appendix
\ref{A1} it is defined by $P^\sigma:= P_0\circ (X^\sigma)^{-1}$
where $X^\sigma_t:=\int_0^t\sigma dW_u$. Clearly, due to the
dynamics of $S$, cf. Equation \ref{S1}, $P^\sigma$ is the usual
risk neutral probability measure in the Black--Scholes model with
fixed volatility rate $\sigma$.
\par
Similar as in the case with constraints, see \cite{kk1}, we can
prove the following three lemmata. For this let $H$ be a given
European contingent claim.
\begin{lemma}\label{lemma2}
For any $\sigma\in [\underline{\sigma},\overline{\sigma}]$ the
following inequality chain holds:
\begin{align*}
h_{low}\leq u_0^{\sigma}\leq h_{up}.
\end{align*}
\end{lemma}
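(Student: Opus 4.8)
**The plan is to prove the two inequalities $h_{low}\leq u_0^\sigma$ and $u_0^\sigma\leq h_{up}$ separately, in each case exploiting that under the single measure $P^\sigma$ the discounted wealth process is a genuine $P^\sigma$--martingale (or at least supermartingale), while the hedging requirement holds quasi--surely, hence in particular $P^\sigma$--a.s.**

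First I would show $u_0^\sigma\leq h_{up}$. If $\mathcal{U}=\emptyset$ there is nothing to prove, so take $y\in\mathcal{U}$ with associated $(\pi,C)\in\mathcal{A}(y)$ satisfying $X_T^{y,\pi,C}\geq H_T$ q.s. Since $P^\sigma\in\mathcal{P}$, this dominance holds $P^\sigma$--a.s. Applying $E^{P^\sigma}(\cdot\,\gamma_T^{-1})$ to the wealth equation
\begin{align*}
\gamma_T^{-1}X_T^{y,\pi,C}=y-\int_0^T\gamma_u^{-1}dC_u+\int_0^T\gamma_u^{-1}X_u\pi_u\,dB_u,
\end{align*}
the stochastic integral term is a $P^\sigma$--martingale (it has zero $P^\sigma$--expectation because $(\pi_tX_t)\in H_G^1(0,T)$ and $B$ is, under $P^\sigma$, a Brownian motion with volatility $\sigma$; here I would invoke the admissibility lower bound $X_t\geq -L$ with $L\in L_G^2(\Omega_T)$ to rule out a strict supermartingale / local-martingale pathology), and $C$ is nondecreasing with $C_0=0$, so $E^{P^\sigma}(\gamma_T^{-1}X_T^{y,\pi,C})\leq y$. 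Combined with $X_T^{y,\pi,C}\geq H_T$ $P^\sigma$--a.s. this gives $u_0^\sigma=E^{P^\sigma}(\gamma_T^{-1}H_T)\leq y$. Taking the infimum over $y\in\mathcal{U}$ yields $u_0^\sigma\leq h_{up}$.

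The inequality $h_{low}\leq u_0^\sigma$ is symmetric. Take $y\in\mathcal{L}$ with $(\pi,C)\in\mathcal{A}(-y)$ and $X_T^{-y,\pi,C}\geq -H_T$ q.s., hence $P^\sigma$--a.s. The same computation applied to $X^{-y,\pi,C}$ gives $E^{P^\sigma}(\gamma_T^{-1}X_T^{-y,\pi,C})\leq -y$, so $-E^{P^\sigma}(\gamma_T^{-1}H_T)\leq E^{P^\sigma}(\gamma_T^{-1}X_T^{-y,\pi,C})\leq -y$, i.e. $y\leq u_0^\sigma$. Taking the supremum over $y\in\mathcal{L}$ gives $h_{low}\leq u_0^\sigma$, and concatenating the two bounds finishes the proof.

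The main obstacle is the passage from ``the discounted stochastic integral $\int_0^\cdot\gamma_u^{-1}X_u\pi_u\,dB_u$ is a $P^\sigma$--local martingale'' to ``its $P^\sigma$--expectation at time $T$ is $0$ (or $\leq 0$)''. Under $P^\sigma$, $B$ restricted to its own filtration behaves like $\sigma W$, but $(\pi_tX_t)\in H_G^1(0,T)$ only gives an $L^1$-type control, which is not automatically enough for a true martingale. The clean way around it is to use the uniform lower bound $X_t^{y,\pi,C}\geq -L$ with $L\in L_G^2(\Omega_T)$: then the discounted wealth is a $P^\sigma$--supermartingale bounded below by an integrable random variable, so optional sampling applies and $E^{P^\sigma}(\gamma_T^{-1}X_T)\leq X_0$, which is exactly what both halves of the argument need. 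This is the standard device from \cite{kk1}, and I expect the proof to mirror that reference closely.
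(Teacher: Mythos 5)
Your proof is correct, but it takes a genuinely different route from the paper's. You fix the single prior $P^\sigma$ from the outset, view the discounted wealth as a $P^\sigma$--local martingale minus an increasing process, and use the admissibility lower bound $X_t\geq -L$ with $L\in L_G^2(\Omega_T)$ to upgrade to a supermartingale and conclude $E^{P^\sigma}(\gamma_T^{-1}X_T)\leq y$. The paper instead never leaves the sublinear world: it applies $E_G$ to the wealth equation, uses that $\int_0^\cdot\gamma_t^{-1}X_t\pi_t\,dB_t$ has zero $G$--expectation for $H_G^1$ integrands (Prop.~\ref{prop2} plus condition (ii) of Definition~\ref{admissible}), drops the consumption term by monotonicity, and only at the last step specializes via $E_G(H_T\gamma_T^{-1})=\sup_{P\in\mathcal{P}}E^P(H_T\gamma_T^{-1})\geq u_0^\sigma$. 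Two consequences of the difference are worth noting. First, the paper's argument yields the strictly stronger intermediate fact $y\geq E_G(H_T\gamma_T^{-1})$ for every $y\in\mathcal{U}$ (and $y\leq -E_G(-H_T\gamma_T^{-1})$ for $y\in\mathcal{L}$), which is precisely the half of Theorem~\ref{theorem_main} that its proof later cites back to this lemma; your version only controls $y$ against the constant-volatility priors $P^\sigma$, so it proves the lemma as stated but would not feed into Theorem~\ref{theorem_main} without being redone over all of $\mathcal{P}$. Second, your worry about $H_G^1$ integrability not forcing a true $P^\sigma$--martingale is legitimate, and your workaround via the integrable lower bound and optional sampling (the Karatzas--Kou device) is sound; the G--framework shortcut is that the stochastic integral of an $H_G^1$ integrand is a symmetric $G$--martingale, hence has zero expectation under every $P\in\mathcal{P}$ simultaneously, which is what the paper silently relies on.
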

\begin{proof}
Let $y\in\mathcal{U}$. By definition of $\mathcal{U}$ there exists
a pair $(\pi,C)\in\mathcal{A}(y)$ such that $X_T^{y,\pi,C}\geq
H_T$ q.s. Using the properties of G--expectation as stated in
Appendix \ref{A1}, in particular Prop. \ref{prop2} for the first
equality, we obtain for any $\sigma\in
[\underline{\sigma},\overline{\sigma}]$
\begin{align*}
y=&~E_G\left(y+\int_0^T\gamma_t^{-1}X_t^{y,\pi,C}\pi_tdB_t\right)\geq
E_G\left(y+\int_0^T\gamma_t^{-1}X_t^{y,\pi,C}\pi_tdB_t
-\int_0^T\gamma_t^{-1}dC_t\right)\\
=&~E_G\left(X_T^{y,\pi,C}\gamma_T^{-1}\right) \geq
E_G\left(H_T\gamma_T^{-1}\right)=
\sup_{P\in\mathcal{P}}E^P(H_T\gamma_T^{-1}) \geq u_0^\sigma.
\end{align*}
The first and second inequalities hold due to the monotonicity of
$E_G$, the second equality holds by the definition of the wealth
process and due to $y\in\mathcal{U}$, the third equality by the
representation theorem for $E_G$, cf. Theorem
\ref{theorem_representation}, and the last estimate holds because
of $P^\sigma\in\mathcal{P}$. Hence, $h_{up}\geq u_0^\sigma$.\par
Similarly, let $y\in\mathcal{L}$ and $(\pi,C)\in\mathcal{A}(-y)$
be the corresponding pair such that $X_T^{-y,\pi,C}\geq -H_T$ q.s.
By the same reasoning as above we obtain for any $\sigma\in
[\underline{\sigma},\overline{\sigma}]$
\begin{align*}
-y=&~E_G\left(-y+\int_0^T\gamma_t^{-1}X_t^{-y,\pi,C}\pi_tdB_t\right)\\
\geq &~E_G\left(-y+\int_0^T\gamma_t^{-1}X_t^{-y,\pi,C}\pi_tdB_t
-\int_0^T\gamma_t^{-1}dC_t\right)\\
=&~E_G\left(X_T^{-y,\pi,C}\gamma_T^{-1}\right) \geq
E_G\left(-H_T\gamma_T^{-1}\right)\geq
-E^{P^\sigma}\left(H_T\gamma_T^{-1}\right)= -u_0^\sigma
\end{align*}
which implies $y\leq u_0^\sigma$ and the Lemma follows.
\end{proof}
\begin{lemma}\label{lemma3}
For any price $H_0>h_{up}$ there exists an arbitrage opportunity.
Also for any price $H_0<h_{low}$ there exists an arbitrage
opportunity.
\end{lemma}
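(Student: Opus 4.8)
The plan is to prove Lemma \ref{lemma3} by an explicit construction of an arbitrage strategy in $(\mathcal{M},H)$, exploiting the characterization of $h_{up}$ and $h_{low}$ as infima/suprema over the hedging classes $\mathcal{U}$ and $\mathcal{L}$. Consider first the case $H_0 > h_{up}$. Since $h_{up} = \inf\{y \mid y\in\mathcal{U}\}$, there exists some $y\in\mathcal{U}$ with $h_{up}\leq y < H_0$; by definition of $\mathcal{U}$ there is an admissible pair $(\pi,C)\in\mathcal{A}(y)$ with $X^{y,\pi,C}_T\geq H_T$ q.s. The idea is for the agent to sell the claim at price $H_0$, invest $y$ into the super-hedging portfolio $(\pi,C)$, and pocket the difference $H_0 - y > 0$ (put in the bond, say). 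Formally this corresponds to the arbitrage definition with $a = -1$ and initial wealth chosen so that $y + a\cdot H_0 = y - H_0 \leq 0$ (in fact strictly negative). The terminal position is $X^{y,\pi,C}_T - H_T + (H_0-y)\gamma_T \geq (H_0-y)\gamma_T > 0$ q.s., which certainly satisfies both required conditions (nonnegativity q.s., and strict positivity with positive probability under every $P\in\mathcal{P}$, hence at least one). One should check that adding a positive constant position in the bond to an admissible pair still yields an admissible pair in $\mathcal{A}$ of the appropriate initial capital — this is immediate since it only shifts the wealth process upward by a deterministic nonnegative amount, preserving the lower bound in Definition \ref{admissible}(iii) and not affecting the integrability of $(\pi_t X_t)$, because $\pi$ and the stock holdings are unchanged.

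The case $H_0 < h_{low}$ is symmetric, using $h_{low} = \sup\{y\mid y\in\mathcal{L}\}$: there exists $y\in\mathcal{L}$ with $H_0 < y \leq h_{low}$, hence an admissible pair $(\pi,C)\in\mathcal{A}(-y)$ with $X^{-y,\pi,C}_T\geq -H_T$ q.s. Now the agent buys the claim (position $a=+1$), finances it by borrowing, i.e. starts the market portfolio with initial wealth $-y$, and the slack $y - H_0 > 0$ is again placed in the bond. Checking $y + a\cdot H_0$: with initial wealth for the market part equal to $-y$ plus the bond position $y - H_0$, the total initial capital is $-H_0$ and $-H_0 + 1\cdot H_0 = 0 \leq 0$ — one may need to rephrase to match the exact bookkeeping of Definition \ref{arbitrage}, but the point is that the combined initial cost is nonpositive. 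At maturity the position is $X^{-y,\pi,C}_T + H_T + (y-H_0)\gamma_T \geq (y-H_0)\gamma_T > 0$ q.s., giving the arbitrage.

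There are no deep obstacles here; the lemma is essentially a bookkeeping exercise once Lemma \ref{lemma2} (and the definitions of the hedging classes) is in place. The one genuine point requiring a little care is the admissibility check for the modified strategies: one must verify that shifting the wealth process by a deterministic bond position keeps us inside $\mathcal{A}(\cdot)$ for the new initial capital, and in particular that the lower-bound condition (iii) of Definition \ref{admissible} and the $H_G^1$-membership condition (ii) are preserved. Both hold trivially because the risky holdings $\pi_t X_t$ are unchanged and the wealth is only translated upward by $(H_0-y)\gamma_t\geq 0$ (respectively $(y-H_0)\gamma_t\geq 0$), so the same nonnegative random variable $L$ still works as a lower bound. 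A secondary bookkeeping subtlety is to make sure the decomposition of the total initial endowment into a ``market part'' and a ``bond part'' is consistent with the sign conventions ($y\geq 0$ with $a=-1$, resp. $y\leq 0$ with $a=+1$) in Definition \ref{arbitrage}; absorbing the extra bond holding into $\eta_0$ via the self-financing equation \eqref{s.f.} handles this cleanly.
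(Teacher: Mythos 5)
Your construction is correct, but it is not the one the paper uses, so a comparison is in order. The paper also starts from a $y\in(h_{up},H_0)\subset\mathcal{U}$ with superhedge $(\pi,C)\in\mathcal{A}(y)$, but it then \emph{scales} the whole strategy: it picks $a=H_0/y>1$, runs $(\pi,aC)\in\mathcal{A}(ay)$ with initial wealth $ay=H_0$, and uses $X_T^{ay,\pi,aC}=aX_T^{y,\pi,C}\geq aH_T\geq H_T$. The excess there is $(a-1)X_T^{y,\pi,C}\ge (a-1)H_T$, which is strictly positive only on $\{H_T>0\}$, so the paper must add ``w.l.o.g. $P(H_T>0)>0$'' to verify the positive-probability clause of Definition \ref{arbitrage} (and the argument is silent about the degenerate case $H_T=0$ q.s.). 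You instead bank the cash slack $H_0-y>0$ in the bond, so the terminal excess is the deterministic amount $(H_0-y)\gamma_T>0$ quasi-surely; the ``for at least one $P$'' condition then holds trivially for \emph{every} prior, with no case distinction on $H_T$. That makes your route slightly more robust and closer to the standard textbook ``sell high, superhedge cheap, pocket the difference'' argument; it is also consistent with the consumption trick the paper itself invokes for Lemma \ref{easy}. Two small bookkeeping points on your side, both of which you essentially flag: (i) the modified portfolio \emph{proportion} $\tilde\pi_t=\pi_tX_t/\tilde X_t$ is only formally defined where $\tilde X_t\neq0$, though the relevant admissibility conditions concern $\tilde\pi_t\tilde X_t=\pi_tX_t$ and are untouched (the paper is equally casual about this elsewhere); and (ii) in the $a=+1$ case your total initial wealth $-H_0$ satisfies the sign requirement $y\le 0$ of Definition \ref{arbitrage} only when $H_0\ge 0$ --- if one wants to allow $H_0<0<h_{low}$ one should bank only part of the slack $y-H_0$ so that the initial capital stays nonpositive. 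Neither point affects the substance.
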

\begin{proof}
We only consider the first case since the argument is similar.
Assume $H_0>h_{up}$ and let $y\in (h_{up},H_0)$. By definition of
$h_{up}$ we deduce that $y\in\mathcal{U}$. Hence there exists a
pair $(\pi,C)\in\mathcal{A}(y)$ with
$$
X_T^{y,\pi,C}\geq H_T~~q.s.
$$
and $$y-H_0< 0.$$ This implies the existence of arbitrage in the
sense of Definition \ref{arbitrage}:\\ $\exists\, a>1$ with
$ay=H_0.$ Then $(\pi,aC)\in\mathcal{A}(ay)$ and
$X_T^{ay,\pi,aC}=aX_T^{a,\pi,C}$. Let $P\in\mathcal{P}$, w.l.o.g.
we may assume $P(H_T>0)>0$. Due to
\begin{align*}
1=P(X_T^{y,\pi,C}\geq H_T)\leq
P(aX_T^{y,\pi,C}>H_T)+P(X_T^{y,\pi,C}=0=H_T)
\end{align*}
we deduce $P(X_T^{ay,\pi,aC}>H_T)>0$. Hence, $(ay,\pi,aC)$
constitutes an arbitrage.
\end{proof}
\begin{lemma}\label{lemma4}
For any $H_0\notin \mathcal{L}\cup\mathcal{U}$ the financial
market $(\mathcal{M},H)$ is arbitrage free.
\end{lemma}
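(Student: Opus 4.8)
The plan is to argue by contradiction through the hedging classes. Suppose $(\mathcal{M},H)$ admits an arbitrage opportunity in the sense of Definition \ref{arbitrage}; I will show this forces $H_0\in\mathcal{U}$ or $H_0\in\mathcal{L}$, contradicting $H_0\notin\mathcal{L}\cup\mathcal{U}$. Definition \ref{arbitrage} has two branches according to the constant $a=\pm1$, which I treat separately. In both branches only the initial--wealth inequality and the quasi--sure terminal domination are used; the positive--probability clause plays no role here.

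If $a=-1$ (a short position in $H$), the arbitrage provides $y\ge 0$ and $(\pi,C)\in\mathcal{A}(y)$ with $y-H_0\le 0$ and $X^{y,\pi,C}_T-H_T\ge 0$ quasi--surely. The terminal inequality says exactly that $y\in\mathcal{U}$, while the initial one gives $H_0\ge y\ge 0$. Since $\mathcal{U}$ is upward closed, which is the second assertion of Lemma \ref{easy}, we get $H_0\in\mathcal{U}$, a contradiction. If $a=1$ (a long position in $H$), the arbitrage provides $y\le 0$ and $(\pi,C)\in\mathcal{A}(y)$ with $y+H_0\le 0$ and $X^{y,\pi,C}_T+H_T\ge 0$ quasi--surely. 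Setting $z:=-y\ge 0$ we have $(\pi,C)\in\mathcal{A}(y)=\mathcal{A}(-z)$ and $X^{-z,\pi,C}_T\ge -H_T$ quasi--surely, i.e. $z\in\mathcal{L}$, and the initial inequality gives $0\le H_0\le z$ (a price for the nonnegative claim being itself nonnegative). Downward closedness of $\mathcal{L}$, the first assertion of Lemma \ref{easy}, then yields $H_0\in\mathcal{L}$, again a contradiction. Either way no arbitrage exists.

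The argument is short, and the only delicate point is the sign bookkeeping: one must match the long/short convention $a=\pm1$ of Definition \ref{arbitrage} to the ``$-y$ versus $y$'' convention built into the definitions of $\mathcal{L}$ and $\mathcal{U}$, and check that the pair $(\pi,C)$ supplied by the arbitrage is admissible for precisely the initial capital required by the relevant hedging class (nothing is lost here, since Definition \ref{arbitrage} already asks for $(\pi,C)\in\mathcal{A}(y)$). I would also flag that the statement is implicitly meant for a nonnegative price $H_0$, since for $H_0<0$ buying the claim on zero initial capital is itself an arbitrage; this is harmless and is in any case subsumed by Lemma \ref{lemma3} together with the fact that $0\in\mathcal{L}$.
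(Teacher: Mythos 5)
Your proof is correct and follows essentially the same route as the paper: assume an arbitrage, read off from Definition \ref{arbitrage} that the arbitrage's initial capital lies in $\mathcal{U}$ (for $a=-1$) or $\mathcal{L}$ (for $a=1$), and then invoke the monotonicity of the hedging classes from Lemma \ref{easy} to place $H_0$ itself in that class, contradicting the hypothesis. The only differences are cosmetic: you write out the $a=1$ case that the paper dismisses as ``similar,'' and you correctly flag the implicit assumption $H_0\geq 0$ needed to apply the downward closedness of $\mathcal{L}$, a point the paper glosses over.
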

\begin{proof}
Assume $H_0\notin\mathcal{U},~H_0\notin\mathcal{L}$ and that there
exists an arbitrage opportunity in $(\mathcal{M},H)$. We suppose
that it satisfies Definition \ref{arbitrage} for $a=-1$. The case
$a=1$ works similarly.\\
By definition of arbitrage there exist $y\geq
0,~(\pi,C)\in\mathcal{A}(y)$ with
$$ y=X_0^{(y,\pi,C)}\leq H_0$$
and
$$X_T^{y,\pi,C}\geq H_T ~~q.s.$$
Hence, $y\in\mathcal{U}$, whence $H_0\in\mathcal{U}$ by Lemma
\ref{easy}. This contradicts our assumption.
\end{proof}\\

Now we pass to the proof of Theorem \ref{theorem_main}.\\

\begin{proof}
Let us begin with the first identity
$h_{up}=E_G(H_T\gamma_T^{-1}).$ As seen in the proof of Lemma
\ref{lemma2}, for any $y\in\mathcal{U}$ we have $y\geq
E_G\left(H_T\gamma_T^{-1}\right)$. Hence, $h_{up}=\inf
\{y|y\in\mathcal{U}\}\geq E_G\left(H_T\gamma_T^{-1}\right)$.
\par
To show the opposite inequality define the G-martingale $M$ by
\begin{align*}
M_t:=E_G\left(H_T\gamma_T^{-1}\right|\mathcal{F}_t)\quad\forall
t\leq T.
\end{align*}
By the martingale representation theorem (\cite{song1}), see
Theorem \ref{martingale}, there exist $z\in H_G^1(0,T)$ and a
continuous, increasing process $K=(K_t)$ with $K_T\in
L^1_G(\Omega_T)$ such that for any $t\leq T$
\begin{align*}
M_t=E_G(H_T\gamma_T^{-1}) +\int_0^tz_sdB_s-K_t ~~q.s.
\end{align*}
For any $t\leq T$ we set $y=E_G(H_T\gamma_T^{-1})\geq 0$,
$X_t\pi_t=z_t\gamma_t \in H_G^1(0,T)$, and
$C_t=\int_0^t\gamma_sdK_s\in L^1_G(\Omega_T)$. Then the induced
wealth process $X^{y,\pi,C}$ satisfies for any $t\leq T$
\begin{align*}
\gamma_t^{-1}X_t^{y,\pi,C}=y+\int_0^t
X^{y,\pi,C}_s\pi_s\gamma_s^{-1}dBs-\int_0^t\gamma_s^{-1}dC_s=M_t.
\end{align*}
$C$ obeys the conditions of a cumulative consumption process in
the sense of Definition \ref{consumption} due to the properties of
$K$. Because of $\gamma_t^{-1}X_t^{y,\pi,C}=M_t\geq 0~\forall
t\leq T$ the wealth process is bounded from below, whence
$(\pi,C)$ is admissible for
$y$.\\
As $X_T^{y,\pi,C}=\gamma_T M_T=H_T$ quasi--surely we have
$y=E_G(H_T\gamma_T^{-1})\in\mathcal{U}$. Due to the definition of
$\mathcal{U}$ we conclude $h_{up}\leq E_G(H_T\gamma_T^{-1})$.\\

The proof for the second identity is similar. Again, using the
proof of Lemma \ref{lemma2} we obtain $y\leq
-E_G\left(-H_T\gamma_T^{-1}\right)$ for any $y\in\mathcal{L}$ and
therefore $h_{low}\leq -E_G\left(-H_T\gamma_T^{-1}\right)$.

To obtain $h_{low}\geq -E_G(-H_T\gamma_T^{-1})$ we again define a
G-martingale $M$ by
\begin{align*}
M_t=E_G(-H_T\gamma_T^{-1}|\mathcal{F}_t)~~\forall t\leq T.
\end{align*}
The remaining part is almost a copy of above. Again by the
martingale representation theorem (\cite{song1}) there exist $z\in
H_G^1(0,T)$ and a continuous, increasing process $K=(K_t)$ with
$K_T\in L^1_G(\Omega_T)$ such that for any $t\leq T$
\begin{align*}
M_t=E_G(-H_T\gamma_T^{-1}) +\int_0^tz_sdB_s-K_t ~~q.s.
\end{align*}
As above, for any $t\leq T$ we set $-y=E_G(-H_T\gamma_T^{-1})\geq
0$, $X_t\pi_t=z_t\gamma_t \in H_G^1(0,T)$, and
$C_t=\int_0^t\gamma_sdK_s\in L^1_G(\Omega_T)$. Then the induced
wealth process $X^{-y,\pi,C}$ satisfies for all $t\leq T$
\begin{align*}
\gamma_t^{-1}X_t^{-y,\pi,C}=-y+\int_0^t
X^{-y,\pi,C}_s\pi_s\gamma_s^{-1}dBs-\int_0^t\gamma_s^{-1}dC_s=M_t.
\end{align*}
Again $C$ obeys the conditions of a cumulative consumption process
due to the properties of $K$. Furthermore, for any $t\leq T$
\begin{align*}
\gamma_t^{-1}X_t^{-y,\pi,C}=E_G\left(-H_T\gamma_T^{-1}|\mathcal{F}_t\right)\geq
E_G\left(-H_T|\mathcal{F}_t\right)
\end{align*}
which is bounded from below in the sense of item (iii) in
Definition \ref{admissible} since $-H_T\in L^2_G(\Omega_T)$. Hence
the wealth process is bounded from below. Consequently, $(\pi,C)$
is admissible for $-y$.\\
As $X_T^{-y,\pi,C}=\gamma_T M_T=-H_T$ quasi--surely we have
$y=-E_G(-H_T\gamma_T^{-1})\in\mathcal{L}$. Due to the definition
of $\mathcal{L}$ we conclude $h_{low}\geq -E_G(-H_T\gamma_T^{-1})$
which finishes the proof.
\end{proof}
\begin{remark}\label{remarkK}
By the last theorem we have $h_{low}\neq h_{up}$ in general since
$E_G$ is a sublinear expectation. This implies that the market is
incomplete meaning that not all claims can be hedged perfectly.
Thus in general, there are many no--arbitrage prices for $H$. We
always have $h_{low}\neq h_{up}$ as long as
$(E_G[H_T\gamma_T^{-1}|\mathcal{F}_t])$ is not a symmetric
G--martingale. In the other case, the process $K$ is identically
equal to zero, cf. Remark \ref{K=0}, implying that
$(E_G[H_T\gamma_T^{-1}|\mathcal{F}_t])$ is symmetric and $H_T$ can
be hedged perfectly due to Theorem \ref{martingale} and Remark
\ref{K=0}. As it is being showed in Section \ref{section5}, if $H$
for instance is the usual European call or put option this is only
the case if $\underline{\sigma}=\overline{\sigma}$ which again
implies that $E_G$ becomes the classical expectation.
\end{remark}
\begin{remark}
Again under the presumption of $h_{low}\neq h_{up}$ it is not
clear a priori whether a claim's price $H_0$ equal to $h_{up}$ or
$h_{low}$ induces an arbitrage opportunity or not. In the setting
of \cite{kk1} there may be situations where there is no arbitrage,
while in others there may be arbitrage. For instance, if
$H_0=h_{up}\in\mathcal{U}$ and $C_T>0$ a.s., then this consumption
can be viewed as kind of arbitrage opportunity (see \cite{kk1}).
The agent consumes along the way, and ends up with terminal wealth
$H_T$ almost surely.\par As seen in the proof of Theorem
\ref{theorem_main}, in our setting we always have
$h_{up}\in\mathcal{U}$ and $h_{low}\in\mathcal{L}$. We shall see
that due to our definition of arbitrage -- $
P\left(X_T^{y,\pi,C}-a H_T>0 \right)>0$ only has to hold for one
$P\in\mathcal{P}$ -- we have that a price $H_0=h_{up}$ or
$H_0=h_{low}$ induces arbitrage in $(\mathcal{M},H)$ in the sense
of Definition \ref{arbitrage}.
\end{remark}

\begin{corollary}
For any price $H_0\in (h_{low},h_{up})\neq \emptyset$ of a
European contingent claim at time zero there does not exist any
arbitrage opportunity
in $(\mathcal{M},H)$.\\
For any price $H_0\notin (h_{low},h_{up})\neq \emptyset$ there
does exist arbitrage in the market.
\end{corollary}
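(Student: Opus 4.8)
The plan is to reduce the corollary to the lemmata already established, after first pinning down the exact shape of the hedging classes. I would begin by observing that Theorem \ref{theorem_main}, together with $H_T\geq 0$ and $\gamma_T=e^{rT}\geq 1$, gives $0\leq h_{low}\leq h_{up}<\infty$, and that the constructions in the proof of Theorem \ref{theorem_main} exhibit admissible strategies witnessing $h_{up}\in\mathcal{U}$ and $h_{low}\in\mathcal{L}$. Combining this with Lemma \ref{easy} ($\mathcal{L}$ is downward closed in $[0,\infty)$, $\mathcal{U}$ is upward closed) yields the identifications $\mathcal{U}=[h_{up},\infty)$ and $\mathcal{L}=[0,h_{low}]$.

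The first assertion is then immediate: if $H_0\in(h_{low},h_{up})$, then $H_0>h_{low}$ excludes $H_0\in\mathcal{L}$ and $H_0<h_{up}$ excludes $H_0\in\mathcal{U}$, so $H_0\notin\mathcal{L}\cup\mathcal{U}$ and Lemma \ref{lemma4} gives that $(\mathcal{M},H)$ is arbitrage free.

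For the second assertion I would assume $(h_{low},h_{up})\neq\emptyset$, i.e.\ $h_{low}<h_{up}$, and $H_0\notin(h_{low},h_{up})$. If $H_0>h_{up}$ or $H_0<h_{low}$, Lemma \ref{lemma3} already produces an arbitrage, so only the boundary cases $H_0=h_{up}$ and $H_0=h_{low}$ remain; I would treat $H_0=h_{up}$ in detail, the other being symmetric. Since $h_{low}\neq h_{up}$, Remark \ref{remarkK} tells us that the G--martingale $M_t=E_G(H_T\gamma_T^{-1}\mid\mathcal{F}_t)$ is not symmetric, so in its representation $M_t=h_{up}+\int_0^t z_s\,dB_s-K_t$ (Theorem \ref{martingale}) the continuous increasing process $K$ with $K_0=0$ does not vanish quasi--surely; hence $\{K_T>0\}$ is not polar, i.e.\ $P^*(K_T>0)>0$ for some $P^*\in\mathcal{P}$. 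I would then take $y=h_{up}$ and the strategy of the proof of Theorem \ref{theorem_main} with the consumption switched off, i.e.\ $C\equiv 0$ and $X_t\pi_t:=z_t\gamma_t\in H^1_G(0,T)$, so that $\gamma_t^{-1}X_t^{y,\pi,0}=h_{up}+\int_0^t z_s\,dB_s=M_t+K_t\geq M_t\geq 0$; this makes $(\pi,0)\in\mathcal{A}(y)$, and $X_T^{y,\pi,0}=H_T+\gamma_T K_T\geq H_T$ q.s., with strict inequality precisely on $\{K_T>0\}$. Taking $a=-1$ in Definition \ref{arbitrage}: $y+aH_0=h_{up}-h_{up}=0\leq 0$, $X_T^{y,\pi,0}+aH_T=\gamma_T K_T\geq 0$ q.s., and $P^*(X_T^{y,\pi,0}+aH_T>0)=P^*(K_T>0)>0$, so $(y,\pi,0)$ is an arbitrage. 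For $H_0=h_{low}$ one runs the identical argument with $a=1$, $y=-h_{low}\leq 0$, and $M_t=E_G(-H_T\gamma_T^{-1}\mid\mathcal{F}_t)$, the admissibility lower bound being handled exactly as in the proof of Theorem \ref{theorem_main} using $-H_T\in L^2_G(\Omega_T)$.

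The routine parts are the identification of $\mathcal{L},\mathcal{U}$ and the interior and strict cases, which merely invoke Lemmas \ref{lemma3} and \ref{lemma4}. The main obstacle is the boundary case $H_0=h_{up}$ (and $H_0=h_{low}$): one has to convert the qualitative fact $h_{low}<h_{up}$ into an actual strict gain under a single scenario, which forces passing through the G--martingale representation to produce the non--trivial increasing part $K$ and a measure $P^*$ charging $\{K_T>0\}$, and then one must check that dropping the consumption keeps the strategy admissible — here this reduces to $z_t\gamma_t\in H^1_G(0,T)$ for condition (ii) and to $X^{y,\pi,0}\geq 0$ for condition (iii).
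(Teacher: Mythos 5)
Your proof is correct and follows essentially the same route as the paper: Lemma \ref{lemma4} for prices in the open interval, Lemma \ref{lemma3} for prices outside $[h_{low},h_{up}]$, and for the boundary cases the hedging strategy from the proof of Theorem \ref{theorem_main} with the consumption switched off. The only (harmless) difference is that you extract the scenario $P^*$ directly from the non-polarity of $\{K_T>0\}$ and the pathwise identity $X_T^{y,\pi,0}-H_T=\gamma_T K_T$, whereas the paper argues via selecting $P$ with $E^P(-K_T)<0$ and comparing expectations; both steps rest on Remark \ref{remarkK} to guarantee $K\not\equiv 0$ when $h_{low}\neq h_{up}$.
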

\begin{proof}
The first part directly follows from Lemma \ref{lemma4}. From
Lemma \ref{lemma3} we know that $H_0\notin [h_{low},h_{up}]$
implies the existence of an arbitrage opportunity. Therefore we
only have to show that $H_0=h_{up}$ and $H_0=h_{low}$ admits an
arbitrage opportunity.\\
We only treat the case $H_0=h_{up}$, the second case is analogue.
Comparing the proof of Theorem \ref{theorem_main}, for
$y=E_G(H_T\gamma_T^{-1})$ there exists a pair
$(\pi,C)\in\mathcal{A}(y)$ such that
\begin{align*}
\gamma_T^{-1}X_T^{y,\pi,C}=y+\int_0^T
X_s^{y,\pi,C}\pi_s\gamma_s^{-1}dB_s-\int_0^T\gamma_s^{-1}dC_s=H_T\gamma_T^{-1}~~q.s.
\end{align*}
We had $K_T=\int_0^T\gamma_s^{-1}dC_s$ where $K$ was an
increasing, continuous process with $E_G(-K_T)=0$. Hence we can
select $P\in\mathcal{P}$ such that $E_P(-K_T)<0$, see also Remark
\ref{remarkK}. Then the pair $(\pi,0)\in\mathcal{A}(y)$ satisfies
\begin{align*}
E^P\left(\gamma_T^{-1} X_T^{y,\pi,0}\right)>
E^P\left(\gamma_T^{-1}
X_T^{y,\pi,C}\right)=E^P\left(H_T\gamma_T^{-1}\right).
\end{align*}
Thus, $P\left(X_T^{y,\pi,0}>H_T\right)>0$ and we conclude that
$(\pi,0)\in \mathcal{A}(y)$ constitutes an arbitrage. So, possibly
the agent may consume along the way, and ends up with wealth $H_T$
quasi--surely.
\end{proof}
\begin{remark}\label{discussion2}
Note that the second statement of the corollary heavily depends on
the definition of arbitrage. Under the assumption of $h_{low}\neq
h_{up}$ it states that if $H_0$ is equal to one of the bounds
$h_{up}$ or $h_{low}$ there exists arbitrage in the sense of
Definition \ref{arbitrage}.\par Coming back to the discussion
about the definition of arbitrage started in Remark
\ref{discussion1} the proofs of the corollary and Theorem
\ref{theorem_main} also imply that if we required the last
condition in Definition \ref{arbitrage} to be true for all
scenarios $P\in\mathcal{P}$ then $H_0$ equal to one of the bounds
would not induce arbitrage in this new sense. Hence, $h_{up}$ and
$h_{low}$ would be reasonable prices for the claim.\par However,
there would exist profit with positive probability in many
scenarios. Only the scenarios $P\in\mathcal{P}$ that satisfy
$E^P(-K_T)=0$ would not provide profit with positive probability.
Thus, all $P\in\mathcal{P}$ not being maximizer of
$\sup_{P\in\mathcal{P}}E^P(-K_T)$ would induce arbitrage in the
classical sense when only one probability measure is involved.\par
From our point of view such a situation should be identified as
arbitrage which therefore supports our definition of arbitrage in
\ref{arb0} and \ref{arbitrage}. \par Additionally, even though our
arbitrage definition requires profit with positive probability for
only one scenario it is simultaneously satisfied for all
$P\in\mathcal{P}$ which are not maximizer of
$\sup_{P\in\mathcal{P}}E^P(-K_T)$.
\end{remark}
Based on the corollary we call $(h_{low},h_{up})\neq\emptyset$ the
arbitrage free interval. In the case where a more explicit
martingale representation theorem for
$(E_G[\gamma_T^{-1}H_T|\mathcal{F}_t])$ holds, see \cite{penghu},
we obtain a more explicit form for the consumption process $C$. In
particular in the Markovian setting where $H_T=\Phi(S_T)$ for some
Lipschitz function $\Phi:\mathbb{R}\rightarrow\mathbb{R}$ we can
give more structural details about the bounds $h_{up}$ and
$h_{low}$. We investigate this issue in the following section.


\section{The Markovian setting}\label{section5}

We consider the same financial market $\mathcal{M}$ as before and
restrict ourselves to European contingent claims $H$ which have
the form $H_T=\Phi(S_T)$ for some Lipschitz function
$\Phi:\mathbb{R}\rightarrow\mathbb{R}$.\par We will use a
nonlinear Feynman--Kac formula established in \cite{peng1}. For
this issue let us rewrite the dynamics of $S$ in \eqref{S1} as
\begin{align*}
dS^{t,x}_u= r S^{t,x}_udu+S^{t,x}_udB_u,~~ u\in [t,T],
~~S_t^{t,x}=x>0.
\end{align*}
Similar as the lower and upper arbitrage prices at time $0$ we
define the lower and upper arbitrage prices at time $t\in [0,T],
h_{low}^t(x)$ and $h_{up}^t(x)$. We use the variable $x$ just to
indicate that the stock at a considered time $t$ is at level $x$,
i.e., $S_t=x$.\par The following theorem is an extension of
Theorem \ref{theorem_main}. It establishes the connection of the
lower and upper arbitrage prices with solutions of partial
differential equations.
\begin{theorem}\label{theorem_summary}
Given a European contingent claim $H=\Phi(S_T)$ the upper
arbitrage price $h_{up}^t(x)$ is given by $u(t,x)$ where
$u:[0,T]\times\mathbb{R}_+\rightarrow\mathbb{R}$ is the unique
solution of the PDE
\begin{align}\label{BSB}
\partial_t u + rx\partial_x u+G\left(x^2\partial_{xx}u\right) = ru,
\quad u(T,x)=\Phi(x).
\end{align}
An explicit representation for the corresponding trading strategy
in the stock and the cumulative consumption process is given by
\begin{align*}
\phi_t=&~\partial_x u(t,S_t)\quad \forall t\in [0,T],\\
C_t=&~-\frac{1}{2}\int_0^t\partial_{xx}u(s,S_s)S_s^2 d\langle
B\rangle_s
+\int_0^tG\left(\partial_{xx}u(s,S_s)\right)S_s^2ds\quad\forall
t\in[0,T].
\end{align*}
Similarly, the lower arbitrage price $h_{low}^t(x)$ is given by
$-\underline{u}(t,x)$ where
$\underline{u}:[0,T]\times\mathbb{R}_+\rightarrow\mathbb{R}$ also
solves \eqref{BSB} but with terminal condition
$\underline{u}(T,x)=-\Phi(x)~\forall x\in\mathbb{R}_+$. Also, the
analog expressions hold true for the corresponding trading
strategy and cumulative consumption process.
\end{theorem}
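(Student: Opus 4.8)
The plan is to reduce Theorem~\ref{theorem_summary} to a combination of the nonlinear Feynman--Kac formula of \cite{peng1} and the super--hedging representation already obtained in Theorem~\ref{theorem_main}. First I would recall the Feynman--Kac statement: under the Lipschitz assumption on $\Phi$, the function $u(t,x):=E_G\big(\gamma_t^{-1}\gamma_T\Phi(S^{t,x}_T)\big)$, i.e.\ the conditional G--expectation $E_G(\gamma_T^{-1}\Phi(S_T)\,|\,\mathcal{F}_t)$ written in the Markovian variable $x=S_t$, is the unique (viscosity) solution of the Black--Scholes--Barenblatt PDE~\eqref{BSB}. Combining this with the first identity of Theorem~\ref{theorem_main}, applied at a general starting time $t$, gives $h_{up}^t(x)=E_G(\gamma_T^{-1}\Phi(S_T)\,|\,\mathcal{F}_t)=u(t,S_t)$ on $\{S_t=x\}$, and symmetrically $h_{low}^t(x)=-\underline{u}(t,S_t)$ where $\underline{u}$ solves~\eqref{BSB} with terminal data $-\Phi$. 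That already establishes the first and last assertions of the theorem; the remaining work is to identify the hedging strategy $\phi$ and the consumption process $C$ explicitly.

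For the explicit strategy I would apply the G--It\^o formula to the process $u(t,S_t)$. Assuming for the moment enough smoothness of $u$ (the standard regularization argument for viscosity solutions of~\eqref{BSB} handles the general Lipschitz case), G--It\^o's formula gives
\begin{align*}
d\big(u(t,S_t)\big)=\Big(\partial_t u + rS_t\partial_x u\Big)dt + \partial_x u\,S_t\,dB_t + \tfrac12\partial_{xx}u\,S_t^2\,d\langle B\rangle_t.
\end{align*}
On the other hand, the discounted value $M_t:=\gamma_t^{-1}X_t^{y,\pi,C}$ from the proof of Theorem~\ref{theorem_main} equals the G--martingale $E_G(\gamma_T^{-1}\Phi(S_T)\,|\,\mathcal{F}_t)=\gamma_t^{-1}u(t,S_t)$, and its dynamics are $dM_t=\gamma_t^{-1}X_t\pi_t\,dB_t-\gamma_t^{-1}dC_t$ with $K_t=\int_0^t\gamma_s^{-1}dC_s$ increasing and continuous. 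Writing $u(t,S_t)=\gamma_t M_t$ and differentiating, then matching the $dB_t$ terms forces $X_t\pi_t=\phi_t S_t$ with $\phi_t=\partial_x u(t,S_t)$, i.e.\ the number of shares held is the Delta. Matching the remaining finite--variation terms and using that $u$ solves~\eqref{BSB} — so that $\partial_t u + rS_t\partial_x u - ru = -G(S_t^2\partial_{xx}u)$ — yields
\begin{align*}
dC_t = -\tfrac12\partial_{xx}u(t,S_t)S_t^2\,d\langle B\rangle_t + G\big(\partial_{xx}u(t,S_t)\big)S_t^2\,dt,
\end{align*}
which integrates to the stated formula for $C$. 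The analogous computation with $-\underline{u}$ in place of $u$ and terminal data $-\Phi$ gives the lower--price strategy.

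The main obstacle is regularity: $\Phi$ is only Lipschitz, so $u$ need not be $C^{1,2}$ and the naive application of G--It\^o's formula is not justified. I would circumvent this exactly as in~\cite{peng1}: approximate $\Phi$ by smooth functions $\Phi_n$ converging uniformly (or in the appropriate local sense), note that the corresponding solutions $u_n$ of~\eqref{BSB} are smooth and converge to $u$ locally uniformly together with suitable bounds on derivatives, carry out the It\^o computation and the matching argument for each $n$, and pass to the limit — using continuity of $E_G$ on $L^2_G$ and stability of the stochastic and $d\langle B\rangle$ integrals to transfer the identities for $\phi_n,C_n$ to $\phi,C$. A secondary point requiring care is verifying that the candidate pair $(\pi,C)$ with $\phi_t=\partial_x u(t,S_t)$ actually lies in the admissible class of Definition~\ref{admissible}: one must check $(\phi_t S_t)\in H^1_G(0,T)$, $C_T\in L^1_G(\Omega_T)$ with increasing RCLL (here continuous) paths, and the lower bound $X_t^{y,\pi,C}=u(t,S_t)\gamma_t^{-1}\gamma_t\geq -L$ — the latter following from $\gamma_t^{-1}X_t=M_t=E_G(\gamma_T^{-1}\Phi(S_T)|\mathcal{F}_t)\geq E_G(-|\Phi(S_T)|\,|\,\mathcal{F}_t)$ and the $L^2_G$--integrability of $\Phi(S_T)$, precisely as in the lower--price part of the proof of Theorem~\ref{theorem_main}. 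Once admissibility and the limiting argument are in place, the explicit formulas follow and the theorem is proved.
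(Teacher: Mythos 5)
Your core argument coincides with the paper's: identify $h_{up}^t(x)$ with the Feynman--Kac solution of \eqref{BSB} by running the construction of Theorem \ref{theorem_main} from a general starting time $t$, then apply the G--It\^o formula to the (discounted) value function and read off $\phi$ from the $dB$ term and $C$ from the finite--variation terms, using the PDE to rewrite the $dt$ part as $G(S_t^2\partial_{xx}u)\,dt$. The one place where you diverge is the regularity issue, and there your route is both unnecessary and the weakest link of your write-up. The paper's standing assumption $\underline{\sigma}>0$ (Remark \ref{remark_sigma}) makes $G$ non--degenerate, so the value function is automatically a classical $C^{1,2}$ solution of \eqref{pde}/\eqref{BSB} (interior regularity for uniformly parabolic fully nonlinear equations, as cited to page 19 of \cite{peng1}); the It\^o computation is then directly legitimate and no mollification of $\Phi$ is needed. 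Your proposed approximation scheme, by contrast, needs the second derivatives $\partial_{xx}u_n$ to converge well enough to pass to the limit in the $d\langle B\rangle$ integral defining $C$, and locally uniform convergence of $u_n$ does not give that; the phrase ``suitable bounds on derivatives'' is carrying the entire burden, and those bounds are exactly what the non--degeneracy of $G$ supplies for free. Two smaller points: your discount factor $\gamma_t^{-1}\gamma_T=e^{r(T-t)}$ in the definition of $u$ has the wrong sign (it should be $\gamma_t\gamma_T^{-1}$, consistent with your later use of $u(t,S_t)=\gamma_t M_t$), and the paper's claim that $u$ is \emph{the unique} solution of \eqref{BSB} is settled by a comparison/uniqueness result for the fully nonlinear PDE (cited to \cite{ishii}), which lets it identify $\gamma_t\hat u(t,x)$ with the BSDE--defined solution for $f(x,y)=-ry$; your proof should invoke that uniqueness explicitly rather than fold it into the Feynman--Kac citation. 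Your admissibility check at the end matches the paper's.
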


The PDE in \eqref{BSB} is called Black--Scholes--Barenblatt
equation. It is also established in \cite{avellaneda}.\par Before
passing to the proof let us consider the BSDE
\begin{align*}
Y_s^{t,x}=E_G\left(\Phi(S_T^{t,x}) +\int_s^T
f(S_r^{t,x},Y_r^{t,x})dr | \mathcal{F}_s\right),\quad s\in [t,T],
\end{align*}
where $f:\mathbb{R}\times\mathbb{R}\rightarrow\mathbb{R}$ is a
given Lipschitz function. Since the BSDE has a unique solution,
see \cite{peng1}, we can define a function $u:
[0,T]\times\mathbb{R}_+\rightarrow\mathbb{R}$ by
$u(t,x):=Y_t^{t,x},~ (t,x)\in [0,T]\times \mathbb{R}_+$. Based on
a nonlinear version of the Feynman--Kac formula, see \cite{peng1},
the function $u$ is a viscosity solution of the following PDE
\begin{align}\label{pde}
\partial_t u + rx\partial_x u+G(x^2\partial_{xx}u) +f(x,u)=0,
\quad u(T,x)=\Phi(x).
\end{align}
Now we come to the proof of Theorem \ref{theorem_summary}.\\

\begin{proof}
It is enough just to treat the upper arbitrage price. For that
purpose define the function
\begin{align*}
\hat{u}(t,x):=E_G\left(\Phi(S_T^{t,x})\gamma_T^{-1}\right).
\end{align*}
By arguing as above $\hat{u}$ solves the PDE in \eqref{pde} for
$f\equiv 0$. Since the function $G$ is non--degenerate $\hat{u}$
even becomes a classical $C^{1,2}$--solution, see Remark
\ref{remark_sigma} or page 19 in \cite{peng1}. Therefore, together
with It\^o's formula (Theorem 5.4 in \cite{pengli}) we obtain
\begin{align*}
\hat{u}(t,S_t^{0,x})-&\hat{u}(0,x)\\ =&~\int_0^t \partial_t
\hat{u}(s,S_s^{0,x}) + rS_s^{0,x}\partial_x
\hat{u}(s,S_s^{0,x})ds\\ +&~ \int_0^t S_s\partial_x
\hat{u}(s,S_s^{0,x})dB_s+\int_0^t\frac{1}{2}S_s^2\partial_{xx}\hat{u}(s,S_s^{0,x})d\langle B\rangle_s\\
\overset{\eqref{pde}}{=}&~  \int_0^t S_s\partial_x
\hat{u}(s,S_s^{0,x})dB_s
\\+&~\underset{-K_t=-\int_0^t\gamma_s^{-1}dC_s}{\underbrace{\frac{1}{2}\int_0^t S_s^2\partial_{xx}\hat{u}(s,S_s^{0,x})d\langle B\rangle_s -\int_0^t
S_s^2G\left(\partial_{xx}\hat{u}(s,S_s^{0,x})\right)ds}}.
\end{align*}
Next consider the function
\begin{align*}
\tilde{u}(t,x):=\gamma_t\hat{u}(t,x),\quad\forall (t,x)\in
[0,T]\times\mathbb{R}_+.
\end{align*}
As in Theorem \ref{theorem_main} for $t=0$ we can deduce that
$\tilde{u}(t,x)=h_{up}^t(x)~\forall (t,x)\in
[0,T]\times\mathbb{R}_+$. In addition, one easily checks that
$\tilde{u}$ is a solution of the PDE in \eqref{BSB}. Also the
function $u$ defined by
\begin{align*}
u(t,x):=Y_t^{t,x}=E_G\left(\Phi(S_T^{t,x}) -\int_t^T rY_s^{t,x}ds
| \mathcal{F}_t\right)~\forall (t,x)\in [0,T]\times\mathbb{R}_+
\end{align*}
solves the PDE in \eqref{BSB} due to the nonlinear Feynman--Kac
formula since $f(x,y)=-ry$. By uniqueness of the solution in
\eqref{BSB}, see \cite{ishii} ($f$ is obviously bounded in $x$),
we conclude that $\tilde{u}=u$. Hence,
$u(t,x)=E_G\left(\Phi(S_T^{t,x})\gamma_{T-t}^{-1}\right)=h_t^{up}(x)~\forall
(t,x)\in [0,T]\times\mathbb{R}_+$ and it uniquely solves the
PDE.\par The explicit expressions for the trading strategy $\phi$
and the cumulative consumption process $C$ follow from the
calculations above for $\hat{u}$ and the identity
$\hat{u}(t,x)=u(t,x)\gamma_t^{-1}$.\par Comparing the proof of
Theorem \ref{theorem_main}, using its notations and Remark
\ref{proportion}, we obtain
$z_t=S_t^{0,x}\partial_x\hat{u}(t,S_t^{0,x})=\phi_tS_t^{0,x}\gamma_t^{-1}$.
Hence, $\phi_t=\partial_xu(t,S_t)~\forall t\in [0,T]$.\par
Similarly we derive
\begin{align*}
C_t=\int_0^t\gamma_sdK_s=-\frac{1}{2}\int_0^tS_s^2\partial_{xx}u(s,S_s)d\langle
B\rangle_s+\int_0^tS_s^2G\left(\partial_{xx}u(s,S_s)\right)ds.
\end{align*}
\end{proof}\\

Due to Theorem \ref{theorem_summary} the functions
$u(t,x)=h_{up}^t(x)$ and $\underline{u}(t,x)=-h_{low}^t(x)$ can be
characterized as the unique solutions of the
Black--Scholes--Barrenblatt equation. In the of case of $\Phi$
being a convex or concave function, respectively, the PDE in
\eqref{BSB} simplifies significantly. Due to the following result
it just becomes the classical Black--Scholes PDE in \eqref{BSB1}
for a certain constant volatility level.
\begin{lemma}\label{lemma5}
1. If $\Phi$ is convex $u(t,\cdot)$ is
convex for any $t\leq T$.\\
2. If $\Phi$ is concave $u(t,\cdot)$ is concave for any $t\leq T$.
Analogously, if $\Phi$ is convex $\underline{u}(t,\cdot)$ is
concave for any $t\leq T$. If $\Phi$ is concave
$\underline{u}(t,\cdot)$ is convex for any $t\leq T$.
\end{lemma}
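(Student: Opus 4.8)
The plan is to combine the stochastic representation of the arbitrage prices established in Theorem~\ref{theorem_summary} with the linear scaling of the price dynamics. Recall that $u(t,x)=h_{up}^t(x)=E_G\big(\Phi(S_T^{t,x})\,\gamma_{T-t}^{-1}\big)$ and $\underline u(t,x)=E_G\big(-\Phi(S_T^{t,x})\,\gamma_{T-t}^{-1}\big)$, where $\gamma_{T-t}^{-1}=e^{-r(T-t)}$ is a positive deterministic factor, so that only the $x$--dependence of $E_G(\Phi(S_T^{t,x}))$ (respectively $E_G(-\Phi(S_T^{t,x}))$) matters. The first step is to record that $S_u^{t,x}=x\,S_u^{t,1}$ q.s.\ on $[t,T]$, which holds because the state equation $dS_u^{t,x}=rS_u^{t,x}\,du+S_u^{t,x}\,dB_u$ is linear and admits a unique solution (\cite{peng1}); in particular $S_T^{t,\lambda x_1+(1-\lambda)x_2}=\lambda S_T^{t,x_1}+(1-\lambda)S_T^{t,x_2}$ q.s.\ for every $\lambda\in[0,1]$.

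For the convex assertions I would argue directly from the structure of $E_G$. If $\Phi$ is convex, then pathwise --- hence q.s.\ --- we have $\Phi\big(\lambda S_T^{t,x_1}+(1-\lambda)S_T^{t,x_2}\big)\le \lambda\,\Phi(S_T^{t,x_1})+(1-\lambda)\,\Phi(S_T^{t,x_2})$. Since $\Phi$ is Lipschitz and $S_T^{t,1}\in L^2_G(\Omega_T)$, both sides lie in $L^1_G(\Omega_T)$, so applying first the monotonicity of $E_G$ and then its sublinearity (sub--additivity together with positive homogeneity) yields $u(t,\lambda x_1+(1-\lambda)x_2)\le \lambda\,u(t,x_1)+(1-\lambda)\,u(t,x_2)$, i.e.\ $u(t,\cdot)$ is convex. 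The same computation carried out with $-\Phi$ in place of $\Phi$ shows that $\underline u(t,\cdot)$ is convex whenever $\Phi$ is concave.

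For the concave assertions the sublinearity inequality points the wrong way and sub--additivity is useless, so I would instead run a verification argument based on the uniqueness of the solution of the Black--Scholes--Barenblatt equation~\eqref{BSB} (\cite{ishii}). Suppose $\Phi$ is concave and let $w$ be the classical solution on $[0,T)\times\mathbb{R}_+$, continuous up to $t=T$, of the ordinary Black--Scholes PDE $\partial_t w+rx\,\partial_x w+\tfrac12\underline\sigma^2 x^2\partial_{xx}w=rw$ with $w(T,x)=\Phi(x)$; equivalently $w(t,x)=e^{-r(T-t)}E^{P^{\underline\sigma}}\big(\Phi(S_T^{t,x})\big)$. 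Because $E^{P^{\underline\sigma}}$ is an ordinary linear expectation, the pathwise concavity of $x\mapsto\Phi(x\,S_T^{t,1})$ together with the scaling identity gives that $w(t,\cdot)$ is concave, hence $x^2\partial_{xx}w\le 0$. Since $G(y)=\tfrac12\underline\sigma^2 y$ for $y\le 0$, this means $w$ also satisfies $\partial_t w+rx\,\partial_x w+G(x^2\partial_{xx}w)=rw$, i.e.\ $w$ is a solution of~\eqref{BSB} with terminal data $\Phi$; by uniqueness, $u=w$, so $u(t,\cdot)$ is concave. The convex statement for $u$ can be re-derived in the same manner using $\overline\sigma$ rather than $\underline\sigma$, and, applying the just-proved statement to the terminal condition $-\Phi$, one gets that $\underline u(t,\cdot)$ is concave whenever $\Phi$ is convex.

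The step I expect to be the real obstacle is making the verification argument watertight: one must check that the candidate $w$ actually lies in the uniqueness class for~\eqref{BSB} --- that it is continuous on $[0,T]\times\mathbb{R}_+$ with the right behaviour at $x=0$ and as $x\to\infty$, and that $\Phi$ being merely Lipschitz still yields a genuine $C^{1,2}$ solution on $[0,T)$ (standard parabolic regularity, using the non--degeneracy of the operator) --- so that the comparison principle of~\cite{ishii} genuinely forces $u=w$. The remaining points (integrability of $\Phi(S_T^{t,x})$ under $E_G$, compatibility of the pathwise convexity inequalities with the q.s.\ framework) are routine.
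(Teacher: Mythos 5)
Your proposal is correct and follows essentially the same route as the paper: the convex case is handled by the linear scaling $S_T^{t,x}=x\,S_T^{t,1}$ together with monotonicity and sublinearity of $E_G$, and the concave case by verifying that the classical Black--Scholes solution with volatility $\underline{\sigma}$ is concave (so that $G(x^2\partial_{xx}w)=\tfrac12\underline{\sigma}^2x^2\partial_{xx}w$), hence solves the Black--Scholes--Barenblatt equation, and invoking uniqueness. Your closing remarks about the uniqueness class and regularity are a fair flag of a point the paper passes over lightly, but they do not change the argument.
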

\begin{proof}
Again we only need to consider the upper arbitrage price. It is
determined by the function
$u(t,x)=E_G\left(\Phi(S_T^{t,x})\gamma_{T-t}^{-1}\right)~\forall
(t,x)\in [0,T]\times\mathbb{R}_+.$\par
Firstly, let $\Phi$ be
convex, $t\in [0,T)$, and $x,y\in\mathbb{R}_+$. Then we have for
any $\alpha\in [0,1]$
\begin{align*}
u(t,\alpha x+&(1-\alpha)y)=E_G\left[\Phi\left(S_T^{t,\alpha
x+(1-\alpha)y}\right)e^{-r(T-t)}\right]\\
=&~E_G\left[\Phi\left(\left(\alpha
x+(1-\alpha)y\right)e^{r(T-t)-\frac{1}{2}<B>_{T-t}+B_{T-t}}\right)e^{-r(T-t)}\right]\\
\leq &~E_G\left[\alpha\Phi\left(
xe^{r(T-t)-\frac{1}{2}<B>_{T-t}+B_{T-t}}\right) +(1-\alpha)\Phi\left(ye^{r(T-t)-\frac{1}{2}<B>_{T-t}+B_{T-t}}\right)\right]e^{-r(T-t)}\\
\leq &~E_G\left[\alpha\Phi\left(
xe^{r(T-t)-\frac{1}{2}<B>_{T-t}+B_{T-t}}\right)e^{-r(T-t)}\right]\\& +E_G\left[(1-\alpha)\Phi\left(ye^{r(T-t)-\frac{1}{2}<B>_{T-t}+B_{T-t}}\right)e^{-r(T-t)}\right]\\
=&~ \alpha E_G\left[\Phi\left(S_T^{t,x}\right)
e^{-r(T-t)}\right]+(1-\alpha)E_G\left[\Phi\left(S_T^{t,y}\right)
e^{-r(T-t)}\right]\\ =&~\alpha u(t,x)+(1-\alpha)u(t,y)
\end{align*}
where we used the convexity of $\Phi$, the monotonicity of $E_G$
and in the second inequality the sublinearity of $E_G$. Thus,
$u(t,\cdot)$ is convex for all $t\in [0,T]$.\par Secondly, let
$\Phi$ be concave. Define for any $(t,x)\in
[0,T]\times\mathbb{R}_+$
\begin{align*}
v(t,x):=E^{P}\left[\Phi\left(\tilde{S}_T^{t,x}\right)e^{-r(T-t)}\right]
\end{align*}
where
\begin{align*}
d\tilde{S}^{t,x}_s= r
\tilde{S}^{t,x}_sds+\underline{\sigma}\tilde{S}^{t,x}_sdW_s,~~
s\in [t,T], ~~\tilde{S}_t^{t,x}=x.
\end{align*}
Remember that $W=(W_t)$ is a classical Brownian motion under $P$.
Then by the classical Feynman--Kac formula $v$ solves the
Black--Scholes PDE in \eqref{BSB1} with $\overline{\sigma}$
replaced by $\underline{\sigma}$.\par Since $E^{P}$ is linear it
is straightforward to show that $v(t,\cdot)$ is concave for any
$t\in [0,T]$. As a consequence, $v$ also solves \eqref{BSB}. By
uniqueness we conclude $v=u$. Hence, $u(t,\cdot)$ is concave for
any $t\in [0,T]$.
\end{proof}\\

As a consequence we have the following corollary.
\begin{corollary}
If $\Phi$ is convex $h_{up}^0(x)=
E^{P^{\overline{\sigma}}}\left(\Phi(S_T^{0,x})\gamma_T^{-1}\right)$
and
\begin{align*}
u(t,x):=
E_G\left(\Phi(S_T^{t,x})\gamma_{T-t}^{-1}\right)=E^{P^{\overline{\sigma}}}\left(\Phi(S_T^{t,x})\gamma_{T-t}^{-1}\right)
\end{align*}
solves the Black--Scholes PDE
\begin{align}\label{BSB1}
\partial_t u+rx\partial_x u+\frac{1}{2}\overline{\sigma}^2x^2\partial_{xx}u=ru,~~u(T,x)=\Phi(x).
\end{align}
If $\Phi$ is concave $h^0_{up}(x)=
E^{P^{\underline{\sigma}}}\left(\Phi(S_T^{0,x})\gamma^{-1}_T\right)$
and
\begin{align*}
u(t,x):=
E_G\left(\Phi(S_T^{t,x})\gamma_{T-t}^{-1}\right)=E^{P^{\underline{\sigma}}}\left(\Phi(S_T^{t,x})\gamma_{T-t}^{-1}\right)
\end{align*}
solves the PDE in \eqref{BSB1} with $\underline{\sigma}$ replacing
$\overline{\sigma}$.\\
Clearly, an analogue result holds for the lower arbitrage price
$h_{low}$, or terminal condition $\underline{u}(T,x)=-\Phi(x)$,
respectively.
\end{corollary}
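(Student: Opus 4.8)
The plan is to deduce the corollary directly from Lemma \ref{lemma5} together with the PDE characterization in Theorem \ref{theorem_summary}, exploiting that the only nonlinearity in the Black--Scholes--Barenblatt equation \eqref{BSB} lies in the term $G(x^2\partial_{xx}u)$ and that $G(y)=\frac12\overline{\sigma}^2 y^+-\frac12\underline{\sigma}^2 y^-$ is linear on each half--line, equal to $\frac12\overline{\sigma}^2 y$ for $y\geq 0$ and to $\frac12\underline{\sigma}^2 y$ for $y\leq 0$.

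First, suppose $\Phi$ is convex. By Theorem \ref{theorem_summary} the upper arbitrage price is $h_{up}^t(x)=u(t,x)$ with $u$ the unique solution of \eqref{BSB}, and by non--degeneracy of $G$ (as used in the proof of Theorem \ref{theorem_summary}) $u$ is a classical $C^{1,2}$--solution. By Lemma \ref{lemma5}, $u(t,\cdot)$ is convex for every $t\leq T$, so $\partial_{xx}u(t,x)\geq 0$, hence $x^2\partial_{xx}u(t,x)\geq 0$ and $G(x^2\partial_{xx}u)=\frac12\overline{\sigma}^2 x^2\partial_{xx}u$. Therefore \eqref{BSB} reduces to the classical Black--Scholes PDE \eqref{BSB1} with volatility $\overline{\sigma}$. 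To identify $u$ probabilistically, set $v(t,x):=E^{P^{\overline{\sigma}}}\!\left(\Phi(S_T^{t,x})\gamma_{T-t}^{-1}\right)$; under $P^{\overline{\sigma}}$ the stock has constant volatility $\overline{\sigma}$, so by the classical Feynman--Kac formula $v$ solves \eqref{BSB1} with volatility $\overline{\sigma}$. Since $x\mapsto S_T^{t,x}$ is linear pathwise (under $P^{\overline{\sigma}}$) and $E^{P^{\overline{\sigma}}}$ is linear, the same chain of estimates as in the proof of Lemma \ref{lemma5} shows $v(t,\cdot)$ is convex, so $v$ also solves \eqref{BSB}. By uniqueness of the solution to \eqref{BSB} (cited in the proof of Theorem \ref{theorem_summary}), $u=v$, which gives the claimed identity and the fact that $u$ solves \eqref{BSB1}; in particular $h_{up}^0(x)=u(0,x)=E^{P^{\overline{\sigma}}}\!\left(\Phi(S_T^{0,x})\gamma_T^{-1}\right)$.

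The concave case is symmetric: by Lemma \ref{lemma5}, $u(t,\cdot)$ is then concave, so $\partial_{xx}u\leq 0$ and $G(x^2\partial_{xx}u)=\frac12\underline{\sigma}^2 x^2\partial_{xx}u$, whence \eqref{BSB} reduces to \eqref{BSB1} with $\underline{\sigma}$ in place of $\overline{\sigma}$; in fact this identification $u=E^{P^{\underline{\sigma}}}(\Phi(S_T^{t,x})\gamma_{T-t}^{-1})$ was already obtained inside the proof of Lemma \ref{lemma5}. For the lower arbitrage price one uses $h_{low}^t(x)=-\underline{u}(t,x)$ with $\underline{u}$ solving \eqref{BSB} with terminal datum $-\Phi$: if $\Phi$ is convex then $-\Phi$ is concave, so by Lemma \ref{lemma5} $\underline{u}(t,\cdot)$ is concave, $\partial_{xx}\underline{u}\leq 0$, and the same reduction yields the classical PDE with volatility $\underline{\sigma}$; if $\Phi$ is concave one gets $\overline{\sigma}$ instead, by the analogous argument.

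There is no serious obstacle here; the only point that needs a moment's care is ensuring that the sign of $\partial_{xx}u$ (and of $\partial_{xx}\underline{u}$) is controlled at every point of $[0,T]\times\mathbb{R}_+$, so that the pointwise convexity/concavity delivered by Lemma \ref{lemma5} actually collapses the $G$--nonlinearity to a linear coefficient. This is exactly where the $C^{1,2}$--regularity of $u$ — a consequence of the non--degeneracy $0<\underline{\sigma}<\overline{\sigma}$ — is used, and once Lemma \ref{lemma5} is available it is immediate. The remaining steps (classical Feynman--Kac for $v$, uniqueness for \eqref{BSB1}, and the elementary convexity computation for $v$) are routine and parallel arguments already carried out in the excerpt.
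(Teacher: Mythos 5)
Your proposal is correct and follows essentially the same route as the paper, whose own proof is the one-line statement that the corollary ``directly follows from Theorem \ref{theorem_summary} and Lemma \ref{lemma5}''; you have simply made explicit the intended mechanism (the sign of $\partial_{xx}u$ collapses the $G$--nonlinearity, and the identification with the constant-volatility Feynman--Kac representation plus uniqueness, exactly as already carried out inside the proof of Lemma \ref{lemma5} for the concave case). No gaps.
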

\begin{proof}
The result directly follows from Theorem \ref{theorem_summary} and
Lemma \ref{lemma5}.
\end{proof}
\begin{example} [European call option]
Consider for $K>0$ the function $\Phi(x)=(x-K)^+$ which represents
the payoff of an European call option. Since $\Phi$ is convex, and
$-\Phi$ concave, we can deduce by means of the last corollary
\begin{align*}
h_{up}^0(x)=&~E^{P^{\overline{\sigma}}}\left((S_T^{0,x}-K)^+\gamma_T^{-1}\right),\\
h_{low}^0(x)=&~-E^{P^{\underline{\sigma}}}\left(-(S_T^{0,x}-K)^+\gamma_T^{-1}\right).
\end{align*}
Furthermore, the function
\begin{align*}
u(t,x):=E^{P^{\overline{\sigma}}}\left((S_T^{t,x}-K)^+\gamma_{T-t}^{-1}\right),
\quad (t,x)\in [0,T]\times \mathbb{R}_+,
\end{align*}
solves the PDE in \eqref{BSB1}. The function
\begin{align*}
\underline{u}(t,x):=E^{P^{\underline{\sigma}}}\left(-(S_T^{t,x}-K)^+\gamma_{T-t}^{-1}\right),
\quad (t,x)\in [0,T]\times \mathbb{R}_+,
\end{align*}
solves Equation \eqref{BSB1} with $\overline{\sigma}$ replaced by
$\underline{\sigma}$ and boundary condition
$\underline{u}(T,x)=-(x-K)^+ ~\forall x\in\mathbb{R}_+$.
\end{example}

If $\Phi$ exhibits mixed convexity/concavity behavior meaning
that, for instance, there exists an $x^\star\in \mathbb{R}_+$ such
that $\Phi\upharpoonright_{[0,x^\star]}$ is convex whereas
$\Phi\upharpoonright_{[x^\star, \infty]}$ is concave, the
situation is much more involved.\par For instance in the case when
$\Phi$ represents a bullish call spread as considered in
\cite{avellaneda} the worst--case volatility will switch between
the volatility bounds $\underline{\sigma}$ and $\overline{\sigma}$
at some threshold $\bar{x}(t)$. The $t$ indicates the time
dependence of the threshold. This fact can be verified by solving
the PDE in \eqref{BSB} numerically, see \cite{avellaneda}.\par
Clearly, the evaluation of $\Phi$ becomes economic relevant when
$\Phi$ represents complex derivatives or a whole portfolio which
combines long and short positions. Pricing the whole portfolio is
more efficient than pricing the single positions separately and
leads to more reasonable results for the no--arbitrage bounds
since the bounds are closer based on the subadditivity of $E_G$.
Numerical methods for solving the Black--Scholes--Barenblatt PDE
in \eqref{BSB} can be found in \cite{barenblatt}.\\

\section{Conclusion}\label{section6}
We present a general framework in mathematical finance in order to
deal with model risk caused by volatility uncertainty. This
encompasses the extension of terminology widely used in Finance
like portfolio strategy, consumption process, arbitrage prices and
the concept of no--arbitrage. It is being modified to a
quasi--sure analysis framework resulting from the presence of
volatility uncertainty.\par Our setting does not involve any
reference measure and hence does not exclude any economic
interesting model a priori. We consider a stock price modeled by a
geometric G--Brownian motion which features volatility uncertainty
based on the structure of a G--Brownian motion. In this ambiguous
financial setting we examine the pricing and hedging of European
contingent claims. The ``G--framework" summarized in \cite{peng1}
gives us a meaningful and appropriate mathematical setting. By
means of a slightly new concept of no--arbitrage we establish
detailed results which provide a better economic understanding of
financial markets under volatility uncertainty.
\par The current paper may form the basis
for examining economic relevant questions in the presence of
volatility uncertainty in the sense that it extends important
notions in Finance and shows how to control. Concrete examples are
problems like hedging under constraints (cf. \cite{kk1}) and
portfolio optimization (cf. \cite{merton}). A natural step is to
extend above results to American contingent claims and then, for
instance, consider entry decisions of a firm in the sense of
irreversible investments as in \cite{ozaki} who solved the problem
in the presence of drift uncertainty.\par By the natural
properties of sublinear expectation any sublinear expectation
induces a coherent risk measure, see \cite{peng1}. G--expectation
may appear as a natural candidate to measure model risk. In this
context one might also imagine many concrete applications in
Finance.


\appendix

\section{Sublinear expectations}\label{section2}
We depict notions and preliminaries in the theory of sublinear
expectation and related G--Brownian motion. This includes the
definition of G--expectation, introduction to It\^o calculus with
G--Brownian motion and important results concerning the
representation of G--expectation and G--martingales. We do not
express definitions and results in their most generality. Our task
rather is to present it in a manner of which we used it in the
previous sections. More details can be
found in \cite{peng1} and \cite{pengli}.\\
We also restrict ourselves to the one-dimensional case. However,
everything also holds in the $d$--dimensional case. Also the
financial market model can be extended to d risky assets using a
d--dimensional G--Brownian motion as it is done in classical
financial markets with Brownian motion.

\subsection{Sublinear expectation, G--Brownian motion and
G--expectation}\label{A1}

\begin{definition}
Let $\Omega\neq\emptyset$ be a given set. Let $\mathcal{H}$ be a
linear space of real valued functions defined on $\Omega$ with
$c\in\mathcal{H}$ for all constants $c$ and $|X|\in\mathcal{H}$ if
$X\in\mathcal{H}$. ($\mathcal{H}$ can be considered as the space
of random variables.) A sublinear expectation $\hat{E}$ on
$\mathcal{H}$ is a functional $\hat{E}:\mathcal{H}\rightarrow
\mathbb{R}$ satisfying the following properties: For any $X,
Y\in\mathcal{H}$ we have
\begin{itemize}
\item [(a)] Monotonicity: If $X\geq Y$ then
$\hat{E}(X)\geq\hat{E}(Y)$. \item [(b)] Constant preserving:
$\hat{E}(c) = c$.  \item [(c)] Sub-additivity: $\hat{E}(X+Y)\leq
\hat{E}(X)+\hat{E}(Y)$. \item [(d)] Positive homogeneity:
$\hat{E}(\lambda X)=\lambda\hat{E}(X)\quad\forall \lambda\geq 0$.
\end{itemize}
The  triple $(\Omega,\mathcal{H},\hat{E})$ is called a sublinear
expectation space.
\end{definition}

Property (c) is also called self--domination. It is equivalent to
$\hat{E}(X)-\hat{E}(Y)\leq \hat{E}(X-Y)$. Property (c) together
with (d) is called sublinearity. It implies convexity:
$$\hat{E}\left(\lambda X + (1 -\lambda)Y\right)\leq\hat{E}(X) + (1
- \lambda)E(Y) \text{ for any } \lambda\in [0,1].$$ The properties
(b) and (c) imply cash translatability: $$\hat{E}\left(X + c
\right) = \hat{E}(X) + c \text{ for any } c \in\mathbb{R}.$$ The
space $C_{l,Lip}(\mathbb{R}^n)$, where $n\geq 1$ is an integer,
plays an important role. It is the space of all real-valued
continuous functions $\varphi$ defined on $\mathbb{R}^n$ such that
$|\varphi(x)-\varphi(y)|\leq C(1+|x|^k+|y|^k)|x-y|~~\forall
x,y\in\mathbb{R}^n$. Here $k$ is an integer depending on
$\varphi$.

\begin{definition}
In a sublinear expectation space $(\Omega,\mathcal{H},\hat{E})$ a
random variable $Y \in\mathcal{H}$ is said to be independent from
another random variable $X\in \mathcal{H}$ under $\hat{E}$ if for
any test function $\varphi\in C_{l,Lip}(\mathbb{R}^2)$ we have
\begin{align*}
\hat{E}[\varphi(X,Y)]=\hat{E}[\hat{E}[\varphi(x,Y)]_{x=X}].
\end{align*}
\end{definition}

\begin{definition}
Let $X_1$ and $X_2$ be two random variables defined on sublinear
expectation spaces $(\Omega_1,\mathcal{H}_1,\hat{E}_1)$ and
$(\Omega_2,\mathcal{H}_2,\hat{E}_2)$, respectively. They are
called identically distributed, denoted by $X_1\sim X_2$, if
\begin{align*}
\hat{E}_1[\varphi(X_1)]=\hat{E}_2[\varphi(X_2)]\quad\forall
\varphi\in C_{l,Lip}(\mathbb{R}).
\end{align*}
We call $\bar{X}$ an independent copy of $X$ if $\bar{X}\sim X$
and $\bar{X}$ is independent from $X$.
\end{definition}
\begin{definition}[G--normal distribution]\label{hier}
A random variable $X$ on a sublinear expectation space
$(\Omega,\mathcal{H},\hat{E})$ is called (centralized) G--normal
distributed if for any $a,b\geq 0$
\begin{align*}
aX+b\bar{X}~\sim~\sqrt{a^2+b^2}X
\end{align*}
where $\bar{X}$ is an independent copy of $X$. The letter $G$
denotes the function
\begin{align*}
G(y):=\frac{1}{2}\hat{E}[yX^2]:\mathbb{R}\rightarrow\mathbb{R}.
\end{align*}
\end{definition}

Note that $X$ has no mean-uncertainty, i.e., one can show that
$\hat{E}(X)=\hat{E}(-X)=0$. Furthermore, the following important
identity holds
\begin{align*}
G(y)=\frac{1}{2}\overline{\sigma}^2y^+-\frac{1}{2}\underline{\sigma}^2y^-
\end{align*}
with $\underline{\sigma}^2:=-\hat{E}(-X^2)$ and
$\overline{\sigma}^2:=\hat{E}(X^2)$. We write $X$ is
$N(\{0\}\times [\underline{\sigma}^2,\overline{\sigma}^2])$
distributed. Therefore we sometimes say that G--normal
distribution is characterized by the parameters
$0<\underline{\sigma}\leq \overline{\sigma}$.

\begin{remark}\label{remark_sigma}
All along the paper we assume $\underline{\sigma}>0$. From an
economic point of view this assumption is quite reasonable. In
Finance, volatility is always assumed to be greater zero. A
volatility equal to zero would induce arbitrage.\par The
G--framework also works without this condition. But based on this
assumption we get along in our paper without the notion of
viscosity solution. Our assumption ensures that the function $G$
is non--degenerate and therefore all involved PDEs induced by the
G--normal distribution, cf. Equation \eqref{PDEG}, become
classical $C^{1,2}$--solutions, see page 19 in \cite{peng1}.
\end{remark}

\begin{remark}
The random variable $X$ defined in \ref{hier} is also
characterized by the following parabolic partial differential
equation (PDE for short) defined on
$[0,T]\times \mathbb{R}$: \\
For any $\varphi\in C_{l,Lip}(\mathbb{R})$ define
$u(t,x):=\hat{E}[\varphi(x+\sqrt{t}X)]$, then $u$ is the unique
(viscosity) solution of
\begin{align}\label{PDEG}
\partial_tu-G(\partial_{xx}u)=0,\quad u(0,\cdot)=\varphi(\cdot).
\end{align}
The PDE is called a G--equation.
\end{remark}

\begin{definition}
Let $(\Omega,\mathcal{H},\hat{E})$ be a sublinear expectation
space. $(X_t)_{t\geq 0}$ is called a stochastic process if $X_t$
is a random variable in $\mathcal{H}$ for each $t\geq 0$.
\end{definition}

\begin{definition}[G--Brownian motion]
A process $(B_t)_{t\geq 0}$ on a sublinear expectation space
$(\Omega,\mathcal{H},\hat{E})$ is called a G--Brownian motion if
the following properties are satisfied:
\begin{itemize}
\item [(i)] $B_0=0$. \item [(ii)] For each $t,s\geq 0$ the
increment $B_{t+s} -B_t$ is $N(\{0\}\times
[\underline{\sigma}^2s,\overline{\sigma}^2s])$ distributed and
independent from $(B_{t_1},B_{t_2},\cdots,B_{t_n})$ for each
$n\in\mathbb{N}$, $0\leq t_1\leq\cdots\leq t_n\leq t$.
\end{itemize}
\end{definition}

Condition (ii) can be replaced by the following three conditions
giving a characterization of G--Brownian motion: \begin{itemize}
\item [(i)] For each $t,s\geq 0$: $B_{t+s}-B_t\sim B_t$ and
$\hat{E}(|B_t|^3)\rightarrow 0$ as $t\rightarrow 0$. \item [(ii)]
The increment $B_{t+s}-B_t$ is independent from
$(B_{t_1},B_{t_2},\cdots,B_{t_n})$ for each $n\in\mathbb{N}$ and
$0\leq t_1\leq\cdots\leq t_n\leq t$. \item [(iii)]
$\hat{E}(B_t)=-\hat{E}(-B_t)=0\quad\forall t\geq 0$.
\end{itemize}\par
For each $t_0>0$ we have that $(B_{t+t_0}-B_{t_0})_{t\geq 0}$
again is a G--Brownian motion.\\
\par

Let us briefly depict the construction of G--expectation and its
corresponding G--Brownian motion. As in the previous sections we
fix a time horizon $T>0$ and set $\Omega_T=C_0([0,T],\mathbb{R})$
-- the space of all real--valued continuous paths starting at
zero.
We will consider the canonical process
$B_t(\omega):=\omega_t,t\leq T,\omega\in\Omega$. We define
\begin{align*}
L_{ip}(\Omega_T):=\{\varphi(B_{t_1},\cdots,B_{t_n})|n\in\mathbb{N},t_1,\cdots,t_n\in
[0,T],\varphi\in C_{l,Lip}(\mathbb{R}^n)\}.
\end{align*}
A G--Brownian motion is firstly constructed on $L_{ip}(\Omega_T)$.
For this purpose let $(\xi_i)_{i\in\mathbb{N}}$ be a sequence of
random variables on a sublinear expectation space
$(\tilde{\Omega},\tilde{\mathcal{H}},\tilde{E})$ such that
$\xi_{i}$ is G--normal distributed and $\xi_{i+1}$ is independent
of $(\xi_1, \cdots,\xi_i)$ for each integer $i\geq 1$.\par Then a
sublinear expectation on $L_{ip}(\Omega_T)$ is constructed by the
following procedure: For each $X\in L_{ip}(\Omega_T)$ with ${X =
\varphi(B_{t_1} - B_{t_0},B_{t_2} - B_{t_1},\cdots,B_{t_n} -
B_{t_{n-1}})}$ for some $\varphi\in C_{l,Lip}(\mathbb{R}^n)$, $0
\leq t_0 < t_1 < \cdots < t_n \leq T$, set {\small\begin{align*}
E_G [\varphi(B_{t_1}-B_{t_0},B_{t_2} - B_{t_1},\cdots,B_{t_n} -
B_{t_{n-1}})]
:=\tilde{E}[\varphi(\sqrt{t_1-t_0}\xi_1,\cdots,\sqrt{t_n-t_{n-1}}\xi_n)].
\end{align*}}
The related conditional expectation of $X \in L_{ip}(\Omega_T)$ as
above under $\Omega_{t_i},i\in\mathbb{N}$, is defined by
{\small\begin{align*} E_G [\varphi(B_{t_1}-B_{t_0},B_{t_2} -
B_{t_1},\cdots,B_{t_n} - B_{t_{n-1}})|\Omega_{t_i}]
:=\psi(B_{t_1}-B_{t_0},\cdots,B_{t_i} - B_{t_{i-1}})
\end{align*}}
where
$\psi(x_1,\cdots,x_i):=\tilde{E}[\varphi(x_1,\cdots,x_i,\sqrt{t_{i+1}-t_i}\xi_{i+1},\cdots,\sqrt{t_n-t_{n-1}}\xi_n)].$
One checks that $E_G$ consistently defines a sublinear expectation
on $L_{ip}(\Omega_T)$ and the canonical process $B$ represents a
G--Brownian motion.
\begin{definition}
{The sublinear expectation $E_G: L_{ip}(\Omega_T)\rightarrow
\mathbb{R}$} defined through the above procedure is called a
G-–expectation. The corresponding canonical process $(B_t)_{t\in
[0,T]}$ on the sublinear expectation space
$(\Omega_T,L_{ip}(\Omega_T),E_G)$ is a G-–Brownian motion.
\end{definition}

Let $||\xi||_p:=[E_G(|\xi|^p)]^{\frac{1}{p}}$ for $\xi\in
L_{ip}(\Omega_T),p\geq 1$. Then for any $t\in [0,T],
E_G(\cdot|\Omega_t)$ can be continuously extended to
$L_G^p(\Omega_T)$ -- the completion of $L_{ip}(\Omega_T)$ under
the norm $||\xi||_p$.
\begin{proposition}
The conditional G--expectation
$E_G(\cdot|\Omega_t):L^1_G(\Omega_T)\rightarrow L^1_G(\Omega_t)$
defined above has the following properties: For any $t\in
[0,T],X,Y\in L^1_G(\Omega_T)$ we have
\begin{itemize}
\item [(i)] $E_G(X|\Omega_t)\geq E_G(Y|\Omega_t)$ if $X\geq Y$.
\item [(ii)] $E_G(\eta|\Omega_t)=\eta$ if $\eta\in
L_G^1(\Omega_t)$. \item [(iii)]
$E_G(X|\Omega_t)-E_G(Y|\Omega_t)\leq E_G(X-Y|\Omega_t)$. \item
[(iv)] ${E_G(\eta X|\Omega_t)=\eta^+E_G(X|\Omega_t)+\eta^-
E_G(-X|\Omega_t) \text{ for each bounded }\eta\in
L_G^1(\Omega_t).}$ \item [(v)]
$E_G\left(E_G\left(X|\Omega_t\right)|\Omega_s\right)=E_G(X|\Omega_{t\wedge
s})$. \item [(vi)] $E_G(X|\Omega_t)=E_G(X)$ for each
$L_G^1(\Omega_T^t)$.
\end{itemize}
\end{proposition}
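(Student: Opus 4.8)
The plan is to establish the six properties first on the dense subspace $L_{ip}(\Omega_T)$, where the conditional expectation is given by the explicit formula displayed above, namely $E_G(\varphi(B_{t_1}-B_{t_0},\dots,B_{t_n}-B_{t_{n-1}})\,|\,\Omega_{t_i})=\psi(B_{t_1}-B_{t_0},\dots,B_{t_i}-B_{t_{i-1}})$ with $\psi(x_1,\dots,x_i)=\tilde E[\varphi(x_1,\dots,x_i,\sqrt{t_{i+1}-t_i}\,\xi_{i+1},\dots,\sqrt{t_n-t_{n-1}}\,\xi_n)]$, and then to extend each property to $L^1_G(\Omega_T)$ by continuity. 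A preliminary reduction: given finitely many elements of $L_{ip}(\Omega_T)$ I may always represent them over a common partition $0\le t_0<t_1<\dots<t_n\le T$ that contains $t$, since inserting a time point only introduces an extra increment which is integrated against the $G$-normal law, and this law is consistent under such refinements. With a common partition fixed, properties (i)--(v) are inherited stagewise from the corresponding properties of $\tilde E$ on $(\tilde\Omega,\tilde{\mathcal H},\tilde E)$ and the mutual independence of the sequence $(\xi_i)$.

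Concretely: for (i), if $X\ge Y$ on $\Omega_T$ then, since the canonical increments $(B_{t_1}-B_{t_0},\dots)$ range over all of $\mathbb{R}^n$, the representing functions satisfy $\varphi_X\ge\varphi_Y$ pointwise, hence $\psi_X\ge\psi_Y$ by monotonicity of $\tilde E$. Property (ii) is immediate because an $\eta\in L_{ip}(\Omega_t)$ is already a function of the increments up to $t_i$, so nothing is integrated out. Property (iii) follows from sub-additivity of $\tilde E$ applied in the definition of $\psi$ for $X-Y$ versus $X$ and $Y$ over the common partition. Property (v), the tower property, reduces to the statement that integrating out the increments from $t_i$ to $t_n$ and then those from $t_j$ to $t_i$ (for $t_j=s\le t=t_i$) against the laws of the $\xi$'s equals integrating from $t_j$ to $t_n$ directly; this is exactly the definition of independence applied blockwise. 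Property (vi) is the special case $t_j=t_0$, using that $L^1_G(\Omega_T^t)$ is generated by the post-$t$ increments, which are independent of $\Omega_t$, so $E_G(X|\Omega_t)$ collapses to the constant $E_G(X)$.

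Property (iv) is the step I expect to be the main obstacle, since it is precisely where sublinearity departs from linearity. For bounded $\eta\in L_{ip}(\Omega_t)$ one writes $\eta=\eta^+-\eta^-$ and needs, at the level of $\tilde E$ and with $\eta^\pm$ regarded as nonnegative functions of the conditioning increments only, the identity $\tilde E[\eta^+Z-\eta^-Z\,|\,\cdot\,]=\eta^+\tilde E[Z\,|\,\cdot\,]+\eta^-\tilde E[-Z\,|\,\cdot\,]$: positive homogeneity handles the $\eta^+$ term, but the $-\eta^-Z$ term contributes $\eta^-\tilde E[-Z\,|\,\cdot\,]$ rather than $-\eta^-\tilde E[Z\,|\,\cdot\,]$, and this must be tracked carefully. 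I would prove it first for $\eta$ a simple function taking finitely many values, where on each level set it reduces to positive homogeneity, and then pass to general bounded $\eta\in L_{ip}(\Omega_t)$ by uniform approximation.

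Finally, to pass from $L_{ip}(\Omega_T)$ to $L^1_G(\Omega_T)$: from (iii) on $L_{ip}$ one gets $|E_G(X|\Omega_t)-E_G(Y|\Omega_t)|\le E_G(|X-Y|\,|\,\Omega_t)$, and applying $E_G$ together with the tower property yields $\|E_G(X|\Omega_t)-E_G(Y|\Omega_t)\|_1\le\|X-Y\|_1$, so $E_G(\cdot|\Omega_t)$ is a $1$-Lipschitz map and extends uniquely and continuously to $L^1_G(\Omega_T)\to L^1_G(\Omega_t)$. Properties (i), (ii), (iii), (v), (vi) then pass to the $\|\cdot\|_1$-limit of $L_{ip}$-approximants directly; for (iv), since $\eta$ is bounded one has $\eta X_n\to\eta X$ and $\eta^\pm E_G(\pm X_n|\Omega_t)\to\eta^\pm E_G(\pm X|\Omega_t)$ in $\|\cdot\|_1$, and combining with continuity of the conditional expectation gives the identity in the limit. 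This completes the plan.
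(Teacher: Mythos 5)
The paper itself offers no proof of this proposition: it sits in the appendix as a quoted summary of results from \cite{peng1}, so there is no in-paper argument to compare against. Your plan is exactly the standard one from that source --- prove the six properties on the dense subspace $L_{ip}(\Omega_T)$ via the explicit formula for $\psi$, derive the contraction estimate $\|E_G(X|\Omega_t)-E_G(Y|\Omega_t)\|_1\le\|X-Y\|_1$ from (iii) and the tower property, and extend by continuity --- and the skeleton is sound, including the common-refinement reduction and the blockwise-independence argument for (v) and (vi). Two points need tightening. First, for (iv) the detour through simple functions and level sets is both unnecessary and mildly circular: localizing $E_G(\eta X|\Omega_t)$ to the sets $\{\eta=c_k\}$ at the level of random variables already presupposes a form of (iv). The clean route is to argue on the representing functions: for each fixed value $x$ of the conditioning increments, $\eta(x)$ is a scalar, and $\tilde E[\eta(x)h(x,\cdot)]=\eta(x)^+\tilde E[h(x,\cdot)]+\eta(x)^-\tilde E[-h(x,\cdot)]$ is immediate from positive homogeneity of $\tilde E$. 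What genuinely requires care is the last extension from $\eta\in L_{ip}(\Omega_t)$ to general bounded $\eta\in L^1_G(\Omega_t)$: approximating $\eta$ in $\|\cdot\|_1$ does not give $\eta_nX\to\eta X$ in $\|\cdot\|_1$ when $X$ is unbounded, so one should first treat bounded $X$ (or truncate $X$) and then remove the truncation; your limiting argument only varies $X$ with $\eta$ fixed and leaves this step open. Second, monotonicity (i) does not pass to the $\|\cdot\|_1$-limit ``directly,'' since approximants of $X\ge Y$ need not be ordered; it follows instead from the already-extended (iii) together with the observation that $E_G((Y-X)^+|\Omega_t)=0$ whenever $(Y-X)^+=0$ quasi-surely. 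Both are repairable technicalities rather than conceptual errors.
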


The following property is often very useful. Of course, it holds
for any sublinear expectation if the related conditional
expectation is defined reasonably.
\begin{proposition}\label{prop2}
Let $X,Y\in L_G^1(\Omega_T)$ with
$E_G(Y|\Omega_t)=-E_G(-Y|\Omega_t)$ for some $t\in [0,T]$. Then we
have
\begin{align*}
E_G(X+Y|\Omega_t)=E_G(X|\Omega_t)+E_G(Y|\Omega_t).
\end{align*}In particular, if $E_G(Y|\Omega_t)=E_G(-Y|\Omega_t)=0$
then we have \\ $E_G(X+Y|\Omega_t)=E_G(X|\Omega_t)$.
\end{proposition}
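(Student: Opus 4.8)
The plan is to obtain the two inequalities separately and combine them, using only subadditivity (property (iii) of the conditional G--expectation) together with the hypothesis on $Y$. No martingale machinery or representation theorem is needed; this is a soft consequence of sublinearity, which is exactly why the statement is phrased as a general fact.

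First I would record the easy direction: subadditivity applied directly to $X$ and $Y$ gives
\[
E_G(X+Y|\Omega_t)\leq E_G(X|\Omega_t)+E_G(Y|\Omega_t).
\]
For the reverse inequality the idea is to decompose $X=(X+Y)+(-Y)$ and apply subadditivity a second time, now to the pair $X+Y$ and $-Y$:
\[
E_G(X|\Omega_t)\leq E_G(X+Y|\Omega_t)+E_G(-Y|\Omega_t),
\]
so that
\[
E_G(X+Y|\Omega_t)\geq E_G(X|\Omega_t)-E_G(-Y|\Omega_t).
\]
At this point I would invoke the hypothesis $E_G(Y|\Omega_t)=-E_G(-Y|\Omega_t)$, i.e. $-E_G(-Y|\Omega_t)=E_G(Y|\Omega_t)$, to rewrite the right-hand side as $E_G(X|\Omega_t)+E_G(Y|\Omega_t)$. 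Combining the two displayed inequalities gives the claimed identity.

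The ``in particular'' statement then follows at once: if $E_G(Y|\Omega_t)=E_G(-Y|\Omega_t)=0$ then a fortiori $E_G(Y|\Omega_t)=-E_G(-Y|\Omega_t)$, so the identity applies and reduces to $E_G(X+Y|\Omega_t)=E_G(X|\Omega_t)$.

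There is essentially no obstacle here; the only point requiring a word of care is that all quantities involved must lie in the right spaces for the conditional expectations and the subadditivity inequality to be legitimate. This is guaranteed by $X,Y\in L_G^1(\Omega_T)$ together with the mapping property $E_G(\cdot|\Omega_t)\colon L^1_G(\Omega_T)\to L^1_G(\Omega_t)$, which also ensures $X+Y,-Y\in L_G^1(\Omega_T)$. The very same argument works verbatim for any sublinear expectation with a reasonably defined conditional expectation, as remarked before the statement.
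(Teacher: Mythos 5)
Your argument is correct and is exactly the intended one: the paper states Proposition \ref{prop2} without proof, remarking only that it ``holds for any sublinear expectation if the related conditional expectation is defined reasonably,'' and your two applications of subadditivity (once to $X,Y$ and once to the decomposition $X=(X+Y)+(-Y)$, then invoking $E_G(Y|\Omega_t)=-E_G(-Y|\Omega_t)$) is the standard way to fill that in. Nothing is missing.
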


G--expectation and its corresponding G--Brownian motion is not
based on a given classical probability measure. The next theorem
establishes the ramification with probability theory. As a
consequence we obtain a set of probability measures which
represents the functional $E_G$ in a subsequently announced sense.
Although the measures belonging to the set are mutually singular
this result is similar to the classical ambiguity setting when the
probability measures inducing the ambiguity are absolutely
continuous, see \cite{ec}, \cite{delbaen}. References for the
representation theorem for G--expectation are \cite{pengdenis} and
\cite{penghu}.\par Let $\mathcal{F}=\mathcal{B}(\Omega_T)$ be the
Borel $\sigma$--algebra and consider the probability space
$(\Omega_T,\mathcal{F},P)$. Let $W=(W_t)$ be a classical Brownian
motion in this space. The filtration generated by $W$ is denoted
by $(\mathcal{F}_t)$ where ${\mathcal{F}_t:=\sigma\{W_s|0\leq
s\leq t\}\vee \mathcal{N}}$ and $\mathcal{N}$ denotes the
collection of $P$--null subsets. For fixed $t\geq 0$ we also
denote $\mathcal{F}_s^t:=\sigma\{W_{t+u}-W_t|0\leq u\leq s\}\vee
\mathcal{N}$.\par Let
$\Theta:=[\underline{\sigma},\overline{\sigma}]$ such that
$G(y)=\frac{1}{2}\sup_{\theta\in\Theta}y\theta^2$ and denote by
$\mathcal{A}_{t,T}^\Theta$ the collection of all $\Theta$--valued
$(\mathcal{F}_s^t)$--adapted processes on $[t,T]$. For any
$\theta\in\mathcal{A}_{t,T}^\Theta$ we define
\begin{align*}
B_T^{t,\theta}:=\int_t^T \theta_s dW_s.
\end{align*}
Let $P^\theta$ be the law of the process
$B_t^{0,\theta}=\int_0^t\theta_sdW_s,t\in [0,T]$, i.e.,
$P^\theta=P\circ (B^{0,\theta})^{-1}$. Define
$\mathcal{P}_1:=\{P^\theta|\theta\in\mathcal{A}_{0,T}^\Theta\}$
and (the weakly compact set)
$\mathcal{P}:=\overline{\mathcal{P}}_1$ as the closure of
$\mathcal{P}_1$ under the topology of weak convergence.
\par
Using these notations we can formulate the following result.
\begin{theorem}\label{theorem_representation}
For any $\varphi\in C_{l,Lip}(\mathbb{R}^n),n\in\mathbb{N}, 0\leq
t_1\leq \cdots\leq t_n\leq T,$ we have
\begin{align*}
E_G[\varphi(B_{t_1},\cdots,B_{t_n}-B_{t_{n-1}})]=&
\sup_{\theta\in\mathcal{A}_{0,T}^\Theta}E^P[\varphi(B_{t_1}^{0,\theta},\cdots,B_{t_n}^{t_{n-1},\theta})]\\
=&~\sup_{\theta\in\mathcal{A}_{0,T}^\Theta}E^{P^\theta}[\varphi(B_{t_1},\cdots,B_{t_n}-B_{t_{n-1}})]\\
=&~\sup_{P^\theta\in\mathcal{P}}E^{P^\theta}[\varphi(B_{t_1},\cdots,B_{t_n}-B_{t_{n-1}})].
\end{align*}
Furthermore,
\begin{align*}
E_G(X)=\sup_{P\in\mathcal{P}}E^P(X)\quad\forall X\in
L_G^1(\Omega_T).
\end{align*}
\end{theorem}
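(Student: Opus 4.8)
The plan is to reduce the statement to finite--dimensional identities on the dense subspace $L_{ip}(\Omega_T)$ and then to pass to the $\|\cdot\|_1$--limit. The three equalities $\sup_{\theta}E^P[\varphi(B_{t_1}^{0,\theta},\dots)]=\sup_{\theta}E^{P^\theta}[\varphi(B_{t_1},\dots)]=\sup_{P^\theta\in\mathcal P}E^{P^\theta}[\varphi(B_{t_1},\dots)]$ are essentially definitional: the second is a change of variables under $P^\theta=P\circ(B^{0,\theta})^{-1}$, and the third uses that $\{P^\theta\mid\theta\in\mathcal A_{0,T}^\Theta\}$ is weakly dense in $\mathcal P$ together with the continuity and polynomial growth of $\varphi$. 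So the real content is (i) the identity $E_G[\varphi(B_{t_1},\dots,B_{t_n}-B_{t_{n-1}})]=\sup_{\theta}E^P[\varphi(B_{t_1}^{0,\theta},\dots,B_{t_n}^{t_{n-1},\theta})]$ for every $\varphi\in C_{l,Lip}(\mathbb R^n)$, and (ii) the extension of the resulting identity $E_G[X]=\sup_{P\in\mathcal P}E^P[X]$ from $X\in L_{ip}(\Omega_T)$ to $X\in L_G^1(\Omega_T)$.

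For (i) I would argue by induction on $n$ via a PDE characterization. In the case $n=1$, set $u(s,x):=E_G[\varphi(x+B_s)]$ and $w(s,x):=\sup_{\theta\in\mathcal A_{0,T}^\Theta}E^P\big[\varphi\big(x+\int_0^s\theta_rdW_r\big)\big]$. By the definition of the $G$--normal distribution, $u$ is the unique viscosity solution of the $G$--heat equation $\partial_s u=G(\partial_{xx}u)$, $u(0,\cdot)=\varphi$; on the other hand $w$ is the value function of a classical stochastic control problem with controlled diffusion coefficient taking values in $\Theta$, hence by dynamic programming it solves the Hamilton--Jacobi--Bellman equation $\partial_s w=\sup_{\sigma\in\Theta}\frac12\sigma^2\partial_{xx}w=G(\partial_{xx}w)$, $w(0,\cdot)=\varphi$. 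Since $G$ is non--degenerate (Remark \ref{remark_sigma}) the solution is unique (indeed classical), so $u=w$. For the inductive step I would peel off the last increment $B_{t_n}-B_{t_{n-1}}$: on the left, use the tower property of the conditional $G$--expectation and its explicit form on $L_{ip}(\Omega_T)$; on the right, condition on $\mathcal F_{t_{n-1}}$ and use that $\int_{t_{n-1}}^{t_n}\theta_rdW_r$ is, given $\mathcal F_{t_{n-1}}$, again of the type handled by the $n=1$ analysis, while the dynamic programming principle lets the inner supremum be taken first. The recursive structure of the two sides being identical, the equality for $n-1$ applied to the function obtained after integrating out the last coordinate yields the equality for $n$.

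For (ii), the key points are that $\mathcal P$ is weakly compact --- which follows from a uniform tightness estimate for the laws of $\int_0^\cdot\theta_rdW_r$, $\theta\in\mathcal A_{0,T}^\Theta$, using $\underline{\sigma}\le\theta\le\overline{\sigma}$ and standard moment bounds --- and, more delicately, that every $X\in L_G^1(\Omega_T)$ possesses a quasi--continuous version, so that $\sup_{P\in\mathcal P}E^P[X]$ is independent of the chosen representative. Granting these, the functional $X\mapsto\sup_{P\in\mathcal P}E^P[X]$ is $1$--Lipschitz for $\|\cdot\|_1$ on $L_{ip}(\Omega_T)$ (since $|\sup_P E^P[X]-\sup_P E^P[Y]|\le\sup_P E^P[|X-Y|]=E_G[|X-Y|]=\|X-Y\|_1$ there, by part (i) applied to $|X-Y|$), it agrees with $E_G$ on $L_{ip}(\Omega_T)$, and therefore extends uniquely and continuously to the completion $L_G^1(\Omega_T)$, where it still coincides with $E_G$.

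I expect the main obstacle to be precisely point (ii): the passage from cylinder functions to $L_G^1(\Omega_T)$ requires the capacity--theoretic apparatus of quasi--sure analysis (weak compactness of $\mathcal P$, the notion of polar sets, and the existence of quasi--continuous representatives), which is the substantive content of \cite{pengdenis} and \cite{penghu}; the PDE verification in (i), by contrast, is routine once the non--degeneracy of $G$ is invoked. Accordingly I would cite those papers for the technical heart of (ii) and present the self--contained argument above for the reduction and for (i).
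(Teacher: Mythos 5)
Your proposal cannot be checked against an argument in the paper, because the paper gives none: Theorem \ref{theorem_representation} is stated in the appendix as a quoted result, with \cite{pengdenis} and \cite{penghu} cited as the references for its proof. What you have written is, in substance, a faithful reconstruction of the strategy of those sources: the identity on cylinder functions is obtained by matching the $G$--heat equation \eqref{PDEG} satisfied by $u(s,x)=E_G[\varphi(x+B_s)]$ with the Hamilton--Jacobi--Bellman equation $\partial_s w=\sup_{\sigma\in\Theta}\frac12\sigma^2\partial_{xx}w=G(\partial_{xx}w)$ of the volatility--control problem (the paper itself records the needed identity $G(y)=\frac12\sup_{\theta\in\Theta}y\theta^2$), the multi--time case follows by conditioning and dynamic programming, and the passage to $L^1_G(\Omega_T)$ rests on the weak compactness of $\mathcal P$ and the quasi--continuity characterization \eqref{pathwise}. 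You also correctly locate the genuinely hard step in the latter extension rather than in the PDE verification. Two points deserve more care than your sketch gives them: first, the comparison/uniqueness for the HJB equation and the dynamic programming principle for adapted $\Theta$--valued controls are themselves nontrivial (in the cited references one works with viscosity solutions and measurable selection, even though non--degeneracy of $G$ upgrades regularity here); second, replacing $\sup_{\mathcal P_1}$ by $\sup_{\mathcal P}$ for unbounded $\varphi\in C_{l,Lip}$ requires uniform moment bounds over $\mathcal P$ to get uniform integrability, not just weak density --- you allude to this but it should be made explicit. With those caveats the outline is sound, and deferring the capacity--theoretic machinery to \cite{pengdenis} and \cite{penghu} is exactly what the paper itself does.
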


The last theorem can also be extended to the conditional
G--expectation, see also \cite{soner1}. For $X\in L^1_G(\Omega_T),
t\in [0,T]$, and $Q\in\mathcal{P}$,
\begin{align*}
E_G(X|\mathcal{F}_t)=
\esssup_{Q'\in\mathcal{P}(t,Q)}E^{Q'}(X|\mathcal{F}_t)\quad Q-a.s.
\end{align*}
where $\mathcal{P}(t,Q):=\{Q'\in\mathcal{P}|Q'=Q \text{ on
}\mathcal{F}_t\}$.\\
\par
As seen in the previous sections the
following terminology is very useful within the framework of
G--expectation.
\begin{definition}
A set $A\in\mathcal{F}$ is \emph{polar} if $P(A)=0$ for all
$P\in\mathcal{P}$. We say a property holds ``\emph{quasi-surely}"
(q.s.) if it holds outside a polar set.
\end{definition}

\cite{peng1} also gives a pathwise description of the space
$L^p_G(\Omega_T)$. This is quite helpful to get a better
understanding of the space. Before passing to the description we
need the following definition.
\begin{definition}
A mapping $X:\Omega_T\rightarrow\mathbb{R}$ is said to be
\emph{quasi--continuous} (q.c.) if $\forall \epsilon
> 0$  there exists an open set $O$ with $\sup_{P\in\mathcal{P}}P(O) < \epsilon$ such
that $X|_{O^c}$ is continuous. \par We say that $X:\Omega_T
\rightarrow \mathbb{R}$ has a \emph{quasi--continuous version} if
there exists a quasi--continuous function
$Y:\Omega_T\rightarrow\mathbb{R}$ with $X = Y$ q.s.
\end{definition}

\cite{peng1} showed that $L^p_G(\Omega_T),p>0,$ is equal to the
closure of the continuous and bounded functions on $\Omega_T$,
$C_b(\Omega_T)$, with respect to the norm
$||X||_p:=(\sup_{P\in\mathcal{P}}E^P[|X|^p])^{\frac{1}{p}}$.
Furthermore, the space $L^p_G(\Omega_T), p>0,$ is characterized by
{\small\begin{align}\label{pathwise} L^p_G(\Omega_T)=\{X\in
L^0(\Omega_T):X\text{ has a q.c. version, }
\lim_{n\rightarrow\infty}\sup_{P\in\mathcal{P}}E^P[|X|^p1_{\{|X|>n\}}]=0\}
\end{align}}
where $L^0(\Omega_T)$ denotes the space of all measurable
real--valued functions on $\Omega_T$.\\
\par The mathematical framework provided enables the analysis of
stochastic processes for several mutually singular probability
measures simultaneously. Therefore, when not stated otherwise all
equations are also to be understood in the sense of ``quasi-sure".
This means that a ``property" holds almost-surely for all
conceivable scenarios.

\subsection{Stochastic calculus of It\^o type with G--Brownian
motion}\label{A2}

We briefly present the basic notions on stochastic calculus like
the construction of It\^o's integral with respect to G--Brownian
motion.\\
For $p\geq 1$, let $M^{p,0}_G(0,T)$ be the collection of all
simple processes $\eta$ of the following form: Let
$\{t_0,t_1,\cdots,t_N\},N\in\mathbb{N}$, be a partition of
$[0,T]$, $\xi_i\in L^p_G(\Omega_{t_i})~ \forall i=0,1,\cdots,N-1$.
Then for any $t\in [0,T]$ the process $\eta$ is defined by
\begin{align}\label{form}
\eta_t(\omega):=\sum_{j=0}^{N-1}\xi_j(\omega)
1_{[t_j,t_{j+1})}(t).
\end{align}
For each $\eta\in M^{p,0}_G(0,T)$ let
$||\eta||_{M_G^p}:=\left(E_G\int_0^T
|\eta_s|^pds\right)^{\frac{1}{p}}$ and denote by $M_G^p(0,T)$ the
completion of $M^{p,0}_G(0,T)$ under the norm $||\cdot||_{M^p_G}$.
\begin{definition}
For $\eta\in M^{2,0}_G(0,T)$ with the presentation in \eqref{form}
we define the integral mapping $I:M^{2,0}_G(0,T)\rightarrow
L^2_G(\Omega_T)$ by
\begin{align*}
I(\eta)=\int_0^T\eta(s)dB_s:=\sum_{j=0}^{N-1}
\xi_j(B_{t_{j+1}}-B_{t_j}).
\end{align*}
\end{definition}

Since $I$ is continuous it can be continuously extended to
$M^2_G(0,T)$. The integral has similar properties as in the
classical It\^o calculus case. For more details see
\cite{peng1}.\par The quadratic variation process of $B$ is
defined like in the classical case as the limit of the quadratic
increments. The following identity holds
\begin{align*}
\langle B\rangle_t=B_t^2-2\int_0^tB_sdB_s\quad\forall t\leq T.
\end{align*}
The quadratic variation of $B$ is a continuous, increasing process
which is absolutely continuous with respect to $dt$. $(\langle
B\rangle_t)$ contains all the statistical uncertainty of $B$. It
is a typical process with mean uncertainty. For $s,t\geq 0$ we
have $\langle B\rangle_{s+t}-\langle B\rangle_s\sim \langle
B\rangle_t$ and it is independent of $\Omega_s$. Furthermore, for
any $t\geq s\geq 0$
\begin{align*}
E_G[\langle B\rangle_t-\langle
B\rangle_s|\Omega_s]=&~\overline{\sigma}^2(t-s),\\E_G[-(\langle
B\rangle_t-\langle
B\rangle_s)|\Omega_s]=&-\underline{\sigma}^2(t-s).
\end{align*}
We say that $\langle B\rangle_t$ is
$N([\underline{\sigma}^2t,\overline{\sigma}^2t]\times
\{0\})$--distributed, i.e., for all $\varphi\in
C_{l,Lip}(\mathbb{R})$,
\begin{align*}
E_G[\varphi(\langle B\rangle_t)]=\sup_{\underline{\sigma}^2\leq
v\leq \overline{\sigma}^2} \varphi(vt).
\end{align*}
The integral with respect to the quadratic variation of
G--Brownian motion $\int_0^t \eta_sd\langle B\rangle_s$ is defined
in an obvious way. Firstly, for all $\eta\in M^{1,0}_G(0,T)$ and
again by a continuity argument for all $\eta\in M^1_G(0,T)$.\\
\par
The following observation is important for the characterization of
G--martingales. The It\^o integral can also be defined for the
following processes, see \cite{song1}: Let $H_G^0(0,T)$ be the
collection of processes $\eta$ having the following form: For a
partition $\{t_0,t_1,\cdots,t_N\}$ of $[0,T], N\in\mathbb{N}$, and
$\xi_i\in L_{ip}(\Omega_{t_i})~ \forall i=0,1,\cdots,N-1$, let
$\eta$ be given by
\begin{align*}
\eta_t(\omega):=\sum_{j=0}^{N-1}\xi_j(\omega)
1_{[t_j,t_{j+1})}(t)\quad\forall t\leq T.
\end{align*}
For $p\geq 1$ and $\eta\in H_G^0(0,T)$ let $||\eta||_{H^p_G}:=
\left(E_G\left(\int_0^T|\eta_s|^2ds\right)^{\frac{p}{2}}\right)^{\frac{1}{p}}$
and denote by $H_G^p(0,T)$ the completion of $H_G^0(0,T)$ under
this norm $||\cdot||_{H_G^p}$. In the case $p=2$ the spaces
$H^2_G(0,T)$ and $M^2_G(0,T)$ coincide. As before we can construct
It\^o's integral $I$ on $H_G^0(0,T)$ and extend it to $H_G^p(0,T)$
for any $p\geq 1$ continuously, hence $I: H_G^p(0,T)\rightarrow
L^p_G(\Omega_T)$.

\subsection{Characterization of G--martingales}\label{A3}

\begin{definition}\label{G--martingale}
A process $M=(M_t)$ with values in $L^1_G(\Omega_T)$ is called
G--martingale if $E_G(M_t|\mathcal{F}_s)=M_s$ for all $s,t$ with
$s\leq t\leq T$. If $M$ and $-M$ are both G--martingales $M$ is
called a symmetric G--martingale.
\end{definition}

By means of the characterization of the conditional G--expectation
we have that $M$ is a G--martingale if and only if for all $0\leq
s\leq t\leq T, P\in\mathcal{P}$,
\begin{align*}
M_s=\esssup_{Q'\in\mathcal{P}(s,P)}E^{Q'}(M_t|\mathcal{F}_s)\quad
P-a.s.
\end{align*}
cf. \cite{soner1}. This identity declares that a G--martingale $M$
can be seen as a multiple prior martingale which is a
supermartingale for any $P\in\mathcal{P}$ and a martingale for an
optimal measure.
\par
The next results give a characterization for G--martingales.
\begin{theorem}
Let $x\in\mathbb{R},z\in M^2_G(0,T)$ and $\eta\in M_G^1(0,T)$.
Then the process \begin{align*} M_t:= x+\int_0^tz_sdB_s+\int_0^t
\eta_sd\langle B\rangle_s-\int_0^t 2G(\eta_s)ds, \quad t\leq T,
\end{align*}
is a G--martingale.
\end{theorem}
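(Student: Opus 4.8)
The plan is to split off the genuinely new piece of the statement — the ``drift--corrected'' quadratic--variation term — and reduce the rest to known facts via Proposition \ref{prop2}. Write $M_t = x + N_t + A_t$ with $N_t := \int_0^t z_s\,dB_s$ and $A_t := \int_0^t \eta_s\,d\langle B\rangle_s - \int_0^t 2G(\eta_s)\,ds$. Both terms lie in $L^1_G(\Omega_T)$: for $N$ this is a basic property of the G--It\^o integral, and for $A$ it follows from $|G(y)|\le\tfrac12\overline{\sigma}^2|y|$ together with $\eta\in M^1_G(0,T)$. By the standard theory of the G--It\^o integral (\cite{peng1}), $N$ is a symmetric G--martingale, so $E_G(N_t-N_s\mid\mathcal{F}_s)=-E_G(-(N_t-N_s)\mid\mathcal{F}_s)=0$ for $s\le t$. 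Writing $M_t = (x+N_s) + A_t + (N_t-N_s)$ and applying Proposition \ref{prop2} first with $Y=N_t-N_s$ and then with the $\mathcal{F}_s$--measurable variable $Y=x+N_s\in L^1_G(\Omega_s)$, the problem reduces to showing that $A$ alone is a G--martingale, i.e. $E_G(A_t\mid\mathcal{F}_s)=A_s$ for all $0\le s\le t\le T$.

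The first step is then to establish this for bounded simple integrands, which can be done directly. Let $\eta_u=\sum_{j=0}^{N-1}\xi_j\,1_{[t_j,t_{j+1})}(u)$ with $\xi_j\in L^1_G(\Omega_{t_j})$ bounded, fix $s\le t$, and refine the partition so that $s=t_k$ and $t=t_m$; then $A_t-A_s=\sum_{j=k}^{m-1}\big[\xi_j(\langle B\rangle_{t_{j+1}}-\langle B\rangle_{t_j})-2G(\xi_j)(t_{j+1}-t_j)\big]$. For each $j$, property (iv) of the conditional G--expectation (applicable since $\xi_j$ is bounded and $\mathcal{F}_{t_j}$--measurable) together with $E_G(\langle B\rangle_{t_{j+1}}-\langle B\rangle_{t_j}\mid\mathcal{F}_{t_j})=\overline{\sigma}^2(t_{j+1}-t_j)$ and $E_G(-(\langle B\rangle_{t_{j+1}}-\langle B\rangle_{t_j})\mid\mathcal{F}_{t_j})=-\underline{\sigma}^2(t_{j+1}-t_j)$ yields $E_G\big(\xi_j(\langle B\rangle_{t_{j+1}}-\langle B\rangle_{t_j})\mid\mathcal{F}_{t_j}\big)=\overline{\sigma}^2\xi_j^+(t_{j+1}-t_j)-\underline{\sigma}^2\xi_j^-(t_{j+1}-t_j)=2G(\xi_j)(t_{j+1}-t_j)$; subtracting the $\mathcal{F}_{t_j}$--measurable drift term (legitimate by Proposition \ref{prop2}) gives $E_G(A_{t_{j+1}}\mid\mathcal{F}_{t_j})=A_{t_j}$. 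Iterating this one--step identity together with the tower property of the conditional G--expectation then gives $E_G(A_t\mid\mathcal{F}_s)=A_s$.

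To pass to a general $\eta\in M^1_G(0,T)$, I would approximate it in $\|\cdot\|_{M^1_G}$ by bounded simple processes $\eta^n$ (bounded simple processes are dense, by truncating the coefficients and using the pathwise description of $L^1_G(\Omega_T)$). Since $y\mapsto G(y)$ is Lipschitz, $\int_0^\cdot 2G(\eta^n_s)\,ds\to\int_0^\cdot 2G(\eta_s)\,ds$ in $L^1_G(\Omega_T)$ uniformly in $t$, and since the $d\langle B\rangle$--integral is $L^1_G$--continuous in its integrand, $A^n_t\to A_t$ in $L^1_G(\Omega_T)$ for every $t$. As $E_G(\cdot\mid\mathcal{F}_s)$ is a $\|\cdot\|_1$--contraction (a consequence of monotonicity and self--domination), one passes to the limit in $E_G(A^n_t\mid\mathcal{F}_s)=A^n_s$ to obtain $E_G(A_t\mid\mathcal{F}_s)=A_s$. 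Combined with the reduction in the first paragraph, this shows $M$ is a G--martingale.

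The main obstacle I expect is the single--subinterval identity $E_G\big(\xi(\langle B\rangle_t-\langle B\rangle_s)\mid\mathcal{F}_s\big)=2G(\xi)(t-s)$ and the bookkeeping around it: one must (a) invoke property (iv) only for \emph{bounded} $\mathcal{F}_s$--measurable coefficients, hence truncate first and justify density of bounded simple processes in $M^1_G(0,T)$; (b) refine partitions so that arbitrary times $s\le t$ become grid points while keeping each coefficient measurable with respect to the \emph{left} endpoint of its subinterval, so that it remains $\mathcal{F}_s$--measurable after refinement; and (c) verify the routine continuity facts — Lipschitz continuity of $G$, $L^1_G$--continuity of $\eta\mapsto\int_0^\cdot\eta\,d\langle B\rangle$, and the contraction property of $E_G(\cdot\mid\mathcal{F}_s)$. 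The fact that the pure G--It\^o integral $\int_0^\cdot z\,dB$ is a symmetric G--martingale is used as a known input from \cite{peng1}.
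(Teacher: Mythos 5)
The paper states this theorem in Appendix~\ref{A3} without proof --- it is quoted as a known result from \cite{peng1} --- so there is no in-paper argument to compare against. Your proof is correct and is essentially the standard one from that reference: reduce to the nonsymmetric part $A$ via Proposition~\ref{prop2} and the symmetric G--martingale property of $\int_0^\cdot z\,dB$, establish the one--step identity $E_G\bigl(\xi(\langle B\rangle_t-\langle B\rangle_s)\mid\mathcal{F}_s\bigr)=2G(\xi)(t-s)$ for bounded $\mathcal{F}_s$--measurable $\xi$ from property (iv) and the conditional mean bounds on the quadratic variation increments, then iterate with the tower property and pass to general $\eta\in M^1_G(0,T)$ by density of bounded simple processes and the $\|\cdot\|_1$--contraction of $E_G(\cdot\mid\mathcal{F}_s)$. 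I see no gaps; the bookkeeping caveats you flag (truncation before invoking property (iv), refining partitions so coefficients stay measurable with respect to left endpoints) are exactly the right ones.
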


In particular, the nonsymmetric part
${-K_t:=\int_0^t\eta_sd\langle B\rangle_s-\int_0^t2G(\eta_s)ds,}$
$t\in [0,T],$ is a G--martingale which is quite surprising
compared to classical probability theory since $(-K_t)$ is
continuous, non--increasing with quadratic variation equal to
zero.
\begin{remark}\label{K=0}
$M$ is a symmetric G--martingale if and only if $K\equiv 0$, see
also \cite{song1}.
\end{remark}
\begin{theorem}[Martingale representation]\label{martingale} (\cite{song1})
Let $\beta\geq 1$ and $\xi\in L_G^\beta (\Omega_T)$. Then the
G--martingale $X$ with $X_t:=E_G(\xi|\mathcal{F}_t),t\in [0,T]$,
has the following unique representation
\begin{align*}
X_t=X_0+\int_0^tz_sdB_s-K_t
\end{align*}
where $K$ is a continuous, increasing process with $K_0=0, K_T\in
L_G^\alpha(\Omega_T), z\in H_G^\alpha(0,T), \forall \alpha\in
[1,\beta)$, and $-K$ a G--martingale.
\end{theorem}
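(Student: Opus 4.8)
The plan is a three-stage argument: prove the representation first for cylinder functionals, derive a priori estimates, and then pass to the limit; uniqueness is treated separately.

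First I would take $\xi \in L_{ip}(\Omega_T)$, say $\xi = \varphi(B_{t_1}-B_{t_0}, \dots, B_{t_n}-B_{t_{n-1}})$ with $\varphi \in C_{l,Lip}(\mathbb{R}^n)$ and $0 = t_0 < \dots < t_n = T$. By the construction of $E_G$, on each subinterval $[t_{i-1}, t_i)$ the G-martingale $X_t = E_G(\xi|\mathcal{F}_t)$ is a deterministic function $u$ of the already-revealed increments and of $B_t - B_{t_{i-1}}$, and $u$ solves the G-heat equation $\partial_t u + G(\partial_{xx} u) = 0$ in its last spatial variable with the obvious terminal data; since $\underline{\sigma} > 0$, $G$ is non-degenerate and $u$ is a classical $C^{1,2}$ solution (Remark \ref{remark_sigma}). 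Applying the G-It\^o formula gives
\[
dX_t = \partial_x u \, dB_t + \tfrac12 \partial_{xx} u \, d\langle B\rangle_t - G(\partial_{xx} u)\, dt,
\]
so I set $z_t := \partial_x u$ and $-K_t := \int_0^t \big( \tfrac12 \partial_{xx} u \, d\langle B\rangle_s - G(\partial_{xx} u)\, ds \big)$. Because $\underline{\sigma}^2 \le d\langle B\rangle_s/ds \le \overline{\sigma}^2$ and $G(a) = \tfrac12 \overline{\sigma}^2 a^+ - \tfrac12 \underline{\sigma}^2 a^-$, the integrand defining $-K$ is pointwise $\le 0$, so $K$ is continuous and increasing with $K_0 = 0$; writing $\eta := \tfrac12 \partial_{xx} u$ (so that $G(\partial_{xx}u) = 2 G(\eta)$ by positive homogeneity), $-K$ is exactly of the form covered by the theorem immediately preceding, hence a G-martingale.

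The heart of the proof, and the step I expect to be the main obstacle, is a set of a priori estimates controlling $\|z\|_{H_G^\alpha(0,T)}$ and $\|K_T\|_{L_G^\alpha(\Omega_T)}$ by $\|\xi\|_{L_G^\beta(\Omega_T)}$ for every $\alpha \in [1,\beta)$. The bound on $z$ follows from a Burkholder--Davis--Gundy inequality for integrals against G-Brownian motion together with a Doob-type maximal inequality $E_G[\sup_{t\le T} |X_t|^\beta] \le C\, E_G[|\xi|^\beta]$ for G-martingales. The genuinely delicate estimate is the one on the decreasing part $K_T$: since $-K$ is a G-martingale with zero quadratic variation, the classical It\^o-isometry argument is unavailable. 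Here I would use the device of \cite{song1}: write $K_T = X_0 - X_T + \int_0^T z_s \, dB_s$ and estimate $E_G[|K_T|^\alpha]$ through a truncation-and-iteration scheme that exploits the non-degeneracy $d\langle B\rangle_s \ge \underline{\sigma}^2 ds$ to bound the bracket of $\int z \, dB$ back in terms of $\|z\|$, thereby closing the system of estimates.

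Granting these bounds, I would finish by density: pick $\xi_m \in L_{ip}(\Omega_T)$ with $\xi_m \to \xi$ in $L_G^\beta$, write $X_t^m = X_0^m + \int_0^t z_s^m \, dB_s - K_t^m$, and note that the a priori estimates make $(z^m)$ Cauchy in $H_G^\alpha(0,T)$ and $(K^m)$ Cauchy in $L_G^\alpha$ uniformly in $t$ (the latter again via Doob applied to the G-martingale $-K^m$); the limits $z$ and $K$ yield the representation for $X_t = E_G(\xi|\mathcal{F}_t)$, and continuity, monotonicity, $K_0 = 0$, $K_T \in L_G^\alpha$ and the G-martingale property of $-K$ all survive the limit. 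For uniqueness, subtracting two representations (and using $X_0 = E_G(\xi)$, so the constants agree) gives $N_t := \int_0^t (z_s - z_s')\, dB_s = K_t - K_t'$; the left-hand side is a symmetric G-martingale, hence a $P$-martingale for every $P \in \mathcal{P}$, while the right-hand side has finite-variation paths q.s., so $N$ is a continuous finite-variation $P$-martingale for each $P$ and therefore $N \equiv 0$ q.s.; then $z = z'$ in $H_G^\alpha$ (using $d\langle B\rangle \ge \underline{\sigma}^2 dt$ in the isometry) and $K = K'$, which completes the proof.
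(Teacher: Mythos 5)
A preliminary point of comparison: the paper does not prove this theorem at all --- it is imported from \cite{song1} (with the $\beta=2$ refinement attributed to \cite{song2}) and used as a black box in the proof of Theorem \ref{theorem_main}. So there is no in-paper argument to measure your proposal against; it has to be judged against the proof in the cited source. Your three-stage architecture --- explicit representation for cylinder functionals via the nonlinear heat equation and It\^o's formula, a priori estimates, density passage, with uniqueness handled separately --- is indeed the architecture of that proof. Stage one is essentially complete and correct: the sign check showing $\tfrac12 a\,\frac{d\langle B\rangle_s}{ds}-G(a)\le 0$ for every real $a$ is right, identifying $-K$ with the nonsymmetric G--martingale $\int\eta\,d\langle B\rangle_s-\int 2G(\eta)\,ds$ of the preceding theorem is the correct way to see that $-K$ is a G--martingale, and the non-degeneracy of $G$ (Remark \ref{remark_sigma}) does give classical regularity of $u$. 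The uniqueness argument (a continuous symmetric G--martingale is a true martingale under every $P\in\mathcal{P}$, and a continuous finite-variation $P$--martingale is constant) is also sound.

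The genuine gap is the one you yourself flag: the a priori bound on $E_G[|K_T|^\alpha]$ in terms of $\|\xi\|_{L_G^\beta(\Omega_T)}$ for $\alpha<\beta$ is asserted, not proved. This is not a routine step that can be delegated to ``a truncation-and-iteration scheme'': it is the entire analytic content of \cite{song1}, it is the reason the conclusion degrades from $\beta$ to every $\alpha<\beta$, and nothing in your sketch explains where that loss of integrability comes from. The identity $K_T=X_0-X_T+\int_0^T z_s\,dB_s$ is by itself circular, since the BDG bound on $\int z\,dB$ presupposes control of $\|z\|_{H_G^\alpha(0,T)}$, which is obtained only jointly with the bound on $K$. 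Moreover, the Doob-type inequality $E_G[\sup_{t\le T}|X_t|^\beta]\le C\,E_G[|\xi|^\beta]$ that you invoke is not available in this form: a G--martingale is only a supermartingale under each prior, and the known maximal inequalities under G--expectation themselves cost an exponent --- a second, unacknowledged source of the restriction $\alpha<\beta$. In short, the skeleton matches the cited proof, but the load-bearing estimates are missing, and they cannot be filled in by classical martingale tools alone.
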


If $\beta=2$ and $\xi$ bounded from above we get that $z\in
M^2_G(0,T)$ and $K_T\in L^2_G(\Omega_T)$, see \cite{song2}.


\bibliographystyle{econometrica}
\ifx\undefined\BySame
\newcommand{\BySame}{\leavevmode\rule[.5ex]{3em}{.5pt}\ }
\fi \ifx\undefined\textsc
\newcommand{\textsc}[1]{{\sc #1}}
\newcommand{\emph}[1]{{\em #1\/}}
\let\tmpsmall\small
\renewcommand{\small}{\tmpsmall\sc}
\fi


\end{document}